\DeclareMathOperator*{\argmin}{argmin}
\definecolor{myred}{rgb}{0.8, 0.0, 0.0}
\definecolor{myblue}{rgb}{0.0, 0.0, 0.8}
\DeclareMathOperator*{\esup}{ess\,sup}
\newcommand{\virg}[1]{``#1''}
\newcommand{\mzr}[1]{\lceil#1\rceil_{\text{M}}}
\newcommand{\abs}[1]{{\left\vert #1 \right\vert}}
\newcommand{\norm}[1]{{\left\vert\kern-0.25ex\left\vert #1\right\vert\kern-0.25ex\right\vert}}
\newcommand{\nnorm}[1]{{\left\vert\kern-0.25ex\left\vert-0.25ex\left\vert #1\right\vert\kern-0.25ex\right\vert-0.25ex\right\vert}}
\newcommand{\norminf}[1]{{\left\vert\kern-0.25ex\left\vert #1\right\vert\kern-0.25ex\right\vert}_{\infty}}
\newcommand{\weakp}[1]{{\left[\kern-0.25ex\left[ #1\right]\kern-0.25ex\right]}}
\newcommand{\rea }{\mathbb{R}}
\newcommand{\nat}{\mathbb{N}}
\newcommand{\ub}{\boldsymbol{u}}
\newcommand{\xb}{\boldsymbol{x}}
\newcommand{\pb}{\boldsymbol{p}}
\newcommand{\bb}{\boldsymbol{b}}
\newcommand{\yb}{\boldsymbol{y}}
\newcommand{\vb}{\boldsymbol{v}}
\newcommand{\etab}{\boldsymbol{\eta}}
\newcommand{\Xc}{\mathcal{X}}
\newcommand{\Ls}{\mathsf{L}}
\newcommand{\Fs}{\mathsf{F}}
\newcommand{\Ss}{\mathsf{S}}
\newcommand{\Ts}{\mathsf{T}}
\newcommand{\ts}{\mathsf{t}}
\newcommand{\Is}{\mathsf{Id}}
\newcommand{\As}{\mathsf{A}}
\newcommand{\Bc}{\mathcal{B}}
\newcommand{\Hc}{\mathcal{H}}
\newcommand{\Gc}{\mathcal{G}}
\newcommand{\Nc}{\mathcal{N}}
\newcommand{\bone}{\boldsymbol{1}}
\newcommand{\bzero}{\boldsymbol{0}}
\newcommand{\fix}{\mathrm{fix}}
\newcommand{\zer}{\mathrm{zer}}
\newcommand{\LReLU}{\mathrm{LReLU}}
\newcommand{\diag}{\mathrm{diag}}
\newcommand{\Lip}{\mathsf{Lip}}
\newtheorem{assum}{Assumption}
\newtheorem{thm}{Theorem}
\newtheorem{rem}{Remark}
\newtheorem{prop}{Proposition}
\newtheorem{cor}{Corollary}
\newtheorem{lem}{Lemma}
\newtheorem{defn}{Definition}
\newtheorem{exmp}{Example}
\let\c@author\relax
\begin{document}


\title{\huge Non-Euclidean Enriched Contraction Theory for\\ Monotone Operators and Monotone Dynamical Systems}
\author{Diego Deplano, Sergio Grammatico, Mauro Franceschelli
\thanks{Work supported by project e.INS- Ecosystem of Innovation for Next Generation Sardinia (cod. ECS 00000038) funded by the Italian Ministry for Research and Education (MUR) under the National Recovery and Resilience Plan (NRRP) - MISSION 4 COMPONENT 2, \virg{From research to business} INVESTMENT 1.5, \virg{Creation and strengthening of Ecosystems of innovation} and construction of \virg{Territorial R\& D Leaders}.}
\thanks{Diego Deplano and Mauro Franceschelli are with the DIEE, University of Cagliari, 09123, Italy. Emails: {\tt\footnotesize diego.deplano@unica.it, mauro.franceschelli@unica.it}.}
\thanks{Sergio Grammatico is with the Delft Center for Systems and Control, TU Delft, The Netherlands. Email: {\tt\footnotesize s.grammatico@tudelft.nl}.}
}

\maketitle
\begin{abstract}
We adopt an operator‐theoretic perspective to analyze a class of nonlinear fixed‐point iterations and discrete‐time dynamical systems.
Specifically, we study the Krasnoselskij iteration---at the heart of countless algorithmic schemes and underpinning the stability analysis of numerous dynamical models---by focusing on a non‑Euclidean vector space equipped with the diagonally weighted supremum norm. 

By extending the state of the art, we introduce the notion of \emph{enriched weak contractivity}, which (i) is characterized by a simple, verifiable condition for Lipschitz operators, and (ii) yields explicit bounds on the admissible step size for the Krasnoselskij iteration.
Our results relate the notion of weak contractivity with that of monotonicity of operators and dynamical systems and show its generality to design larger step sizes and improved convergence speed for broader classes of dynamical systems.

The newly developed theory is illustrated on two applications: the design of zero-finding algorithms for monotone operators and the design of nonlinear consensus dynamics in monotone multi-agent dynamical systems.
\end{abstract}


\section{Introduction}\label{sect1}
We study the Krasnoselskij iteration:
\begin{equation}\label{eq:mainiter_first}
\xb(k+1)=(1-\theta)\xb(k)+\theta \Ts(\xb(k)),\:\: k\in\nat,
\end{equation}
where $\theta\in(0,1)$ is the step size, and $\Ts:\rea^n\rightarrow \rea^n$ is an operator. The Krasnoselskij iteration is a fundamental tool in fixed point theory, and its importance in system theory arises from its role in analyzing and designing iterative algorithms with guaranteed convergence to equilibrium points or invariant sets.
Variations of the Krasnoselskij fixed-point iteration have been adopted to design distributed algorithms for computing fixed points in networks~\cite{li2020distributed,bastianello2020asynchronous,andrade2022distributed,li2023dot,nian2023continuous}, splitting methods in distributed convex optimization~\cite{giselsson2016linear,pavel2019distributed,dall2019convergence,guo2023disa,bastianello2024robust}, aggregative game theory~\cite{grammatico2015decentralized,grammatico2017dynamic}, monotone dynamical systems~\cite{manfredi2017necessary,jafarpour2022non,Deplano23,deplano24neural,kawano2025contraction,parasnis2025perron}, and monotone operator theory~\cite{winston2020monotone,pabbaraju2021estimating,Jafarpour21}.
From a mathematical perspective, the convergence of the Krasnoselskij iteration is a fixed-point problem~\cite{Berinde07} on the operator $\Ts$, or equivalently, a zero-finding problem for the operator $\Is-\Ts$~\cite{Bauschke2017}, where $\Is$ denotes the identity operator.
For instance, consensus in nonlinear multi-agent systems is equivalent to finding a collective state in the kernel of the nonlinear Laplacian operator~\cite{bondy2008graph,Deplano18,Deplano20,bonetto2022nonlinear,bronski2014spectral,devries2018kernel,qu2025controllability}.

In many of the above-mentioned applications, monotonicity plays a central role in guaranteeing convergence and stability, which enables the use of iterative schemes, including the Krasnoselskij iteration in~\eqref{eq:mainiter_first}, to compute fixed points efficiently~\cite{winston2020monotone,pabbaraju2021estimating,Jafarpour21}.
%
%
On the other hand, in dynamical systems theory, monotonicity refers to systems whose trajectories preserve a partial order over time, which facilitates the design of distributed control laws in large-scale networked systems~\cite{manfredi2017necessary,jafarpour2022non,Deplano23,deplano24neural,kawano2025contraction,parasnis2025perron}.
This dual relevance of monotonicity -- both as an algebraic property of operators and as a dynamical property of systems -- has motivated for decades the study of fixed point iterations like~\eqref{eq:mainiter_first} in contexts where the operator $\Ts$ either demonstrates some form of monotonicity or induces monotonic behavior in the trajectories generated by the iteration.

The aim of this paper is to introduce \textit{enriched contraction theory} as a general framework encompassing both of these monotonicity notions.
Toward this direction, we focus on the study of the Krasnoselskij iteration in non-Euclidean vector spaces equipped with a diagonally weighted supremum norm. Understanding its convergence under more general conditions would not only extend classical results but also open up new possibilities for applications in systems, control, learning, and optimization.

\subsection{Literature review}
One of the first convergence results for Krasnoselskij iterations dates back to 1955 and is due to Krasnoselskij~\cite{Krasnoselskii55}, \cite[Theorem 6.4.1]{Agarwal09}, who proved convergence of $\xb(k)$ to a fixed point when $\Ts$ is nonexpansive and $\theta = \frac{1}{2}$ for \textit{uniformly convex} spaces~\cite[Definition 1.8]{Berinde07}.
More than ten years later, Edelstein in~\cite{Edelstein66} extended this result to $\theta \in (0,1)$ and \textit{strictly convex} spaces~\cite[Definition 1.10]{Berinde07}.
In 1976, the convergence results for the Banach-Picard iteration in~\eqref{eq:mainiter_first} in uniformly/strictly convex spaces were extended to general Banach spaces by Ishikawa~\cite[Theorem 1]{ishikawa76}, see also ~\cite[Theorem 6.4.3]{Agarwal09}.
By limiting their analysis to Hilbert spaces, Marino and Xu in~\cite{Marino07} proved that the iteration in~\eqref{eq:mainiter_first} converges also when the map $\Ts$ is $\kappa$-strictly pseudocontractive and $\theta<1-\kappa$.
Moreover, for linear maps in Hilbert spaces, it has been recently proven that $\kappa$-strictly pseudocontractivity of $\Ts$ is both necessary and sufficient for the convergence of the Krasnoselskij iteration, given $\theta<1-\kappa$~\cite[Theorem 1]{Belgioioso18}. 
Marino and Xu in~\cite[Section 3]{Marino07} also posed the currently open question: \virg{\textit{Is strict pseudocontractivity also sufficient in Banach spaces which are uniformly convex?}}.
Since then, many authors have provided different answers to this question by considering several iteration schemes and sets of assumptions~\cite{zhang2009convergence,zhang2009strong,cai2010strong,zhou2010convergence,chidume2010weak,cholamjiak2010weak,cholamjiak2011strong}.

A thorough answer to this question has been given in~\cite{deplano25withapp} in the special case of real vector non-Euclidean spaces of finite dimension $n\in\nat$ equipped with a $p$-norm such that $p\in(1,\infty)$: The Krasnoselskij iteration converges if ${\theta^{r-1}<(1-\kappa)/c_p}$ where $r=\min\{p,2\}$ and $c_p\geq 1$ is a constant that depends on $p$, whose best (smallest) value has been found by Xu in~\cite[Corollary 2]{xu1991inequalities}, and proved in~\cite[Lemmas 4-5]{deplano25withapp}.
Moreover, it has been shown that, for non-uniformly convex spaces with $p\in\{1,\infty\}$, strict pseudocontractivity is not sufficient for the convergence of the Krasnoselskij iteration.
A natural open question arises:
\begin{equation}
\parbox{0.9\linewidth}{
\centering
\virg{\textit{What is the most appropriate generalization of strict pseudocontractivity
in non-Euclidean vector spaces that are not uniformly convex?}}
}
\label{eq:open_question}
\end{equation}
%

\subsection{Main contributions}

The first main contribution of this manuscript is to propose an answer to question \eqref{eq:open_question} by introducing the property of \textit{enriched weak contractivity}. We show that this property serves as a natural generalization of strict pseudocontractivity in Banach spaces, while they are equivalent properties in Hilbert spaces (Proposition~\ref{prop:eqKPSENE}).
Subsequently, by focusing on vector spaces equipped with a diagonally weighted norm, we prove the following original results:
\begin{itemize}
    \item We provide a necessary and sufficient condition for enriched weak contractivity of Lipschitz operators (Theorem~\ref{thm:inf-coenex});
    \item We provide a bound on the maximum allowable step size ensuring the convergence of the Krasnoselskij iteration in~\eqref{eq:mainiter_first} under the assumption that $\Ts$ is enriched weakly contracting (Theorem~\ref{thm:nonlin-bene}).
\end{itemize}
Regarding the relationship with monotone operators, we prove the following original results:
\begin{itemize}
    \item A Lipschitz operator $\Ts$ is enriched weakly contractive if and only if the residual operator $\Is-\Ts$ is (strongly) monotone, where $\Is$ denotes the identity operator (Theorem~\ref{thm:mono-ene});
    \item The Krasnoselskij iteration in~\eqref{eq:mainiter_first} converges for larger allowable range of step sizes and improved contraction factor compared with those obtained by the state of the art on monotone operators~\cite{davydov2024non} (Section~\ref{sec:monotone}).
\end{itemize}
Next, on the relationship with monotone dynamical systems, we prove the following original results:
\begin{itemize}
    \item For monotone dynamical systems $\xb(k+1)=\Ts(\xb(k))$, the operator $\Ts$ is enriched weakly contractive if and only if $\Ts$ is (strictly) subhomogeneous (Corollary~\ref{cor:monoKSPC});
    \item For monotone dynamical systems whose dynamics is ruled by the Krasnoselskij iteration in~\eqref{eq:mainiter_first}, an easy-to-verify bound on the maximum step size is derived (Corollary~\ref{cor:nonlin-bene-mono}).
\end{itemize}
Concluding the manuscript, we apply the above theoretical results on two main applications along with numerical simulations corroborating the technical findings:
\begin{itemize}
    \item Zero finding algorithms for monotone operators -- We derive sufficient conditions for the convergence of the \textit{forward step method} applied to monotone operators (Theorem~\ref{thm:fsm}) and simulate it on linear operators and nonlinear diagonal operators (Sections~\ref{sec:aff}-\ref{sec:dnl}). 
    \item Nonlinear consensus in monotone multi-agent systems -- We derive sufficient conditions on the nonlinear local interaction rule between agents ensuring their convergence to a consensus state (Theorem~\ref{th:consensus_DT}).
\end{itemize}

\textit{Structure of the paper.} 
Section~\ref{sec:background} provides the necessary background on enriched weak contractions, along with necessary and sufficient conditions for its validity and a comparison with other important operator-theoretic properties.
Section~\ref{sec:krasno} provides sufficient conditions for the convergence of the Krasnoselskij iteration on enriched weakly contracting operators in vector spaces equipped with a diagonally weighted norm.
Sections~\ref{sec:app1}-\ref{sec:app2} discuss, respectively, the application of the theoretical results to design algorithms for computing zeros of monotone operators and distributed algorithms for reaching consensus in order-preserving multi-agent systems.

\section{Background on\\ Enriched Weak Contractivity}\label{sec:background}

\subsection{Notation and preliminaries}

The sets of real and integer numbers are
denoted by $\rea$~and~$\mathbb{Z}$, while their restrictions to nonnegative and positive values are denoted by $\rea_{\geq 0}$, $\nat$ and $\rea_+$, $\nat_+$, respectively.
Scalars $s\in\mathbb{R}$ are denoted by lowercase letters, while vectors $\vb\in\mathbb{R}^n$ by boldface bold letters. The vectors of zeros and ones of dimension $n$ are denoted by $\bzero_n$ and $\bone_n$, respectively, and the subscript $n$ is omitted if clear from the context.
Matrices ${M\in \mathbb{R}^{n\times n}}$ are denoted by uppercase letters, and $\mathbb{S}^n_{\succ 0}$ denotes the set of positive definite symmetric matrices, i.e., such that $M^\top = M$ and $\xb^\top M \xb > 0$ for all $\xb\in\rea^n$.
A matrix is said to be \virg{\textit{Metzler}} if its off-diagonal entries $m_{ij}$ (with $i\neq j$) are nonnegative. 
The nonnegative majorant $\abs{M}$ and the Metzler majorant $\mzr{M}$ of a matrix are defined entry-wise by 
\begin{equation}\label{eq:majorant}
        (\abs{M})_{ij} := \abs{m_{ij}},\qquad  (\mzr{M})_{ij} := \begin{cases}
    a_{ii} & \text{if } i=j,\\
    \abs{m_{ij}} & \text{if } i\neq j.
    \end{cases}
\end{equation}
Given a vector $\vb\in\rea^n$, the diagonal matrix whose entries are those of the vector $\vb$ is denoted by $[\vb]$, namely
\begin{equation}\label{eq:diag_weight}
    ([\vb])_{ij} := \begin{cases}
    \vb_i & \text{if } i=j,\\
    0 & \text{if } i\neq j.
    \end{cases}
\end{equation}

Operators $\Ts:\Xc_1\rightarrow\Xc_2$ between two spaces $\Xc_1,\Xc_2$ are usually denoted with block capital letters; for instance, the linear operator associated to the identity matrix $I$ is defined by $\Is:\xb \mapsto I\xb$, but in general operators may be nonlinear. When $\Xc_2\equiv \rea$ sometimes block lowercase letters are used instead, e.g., $\ts:\Xc\rightarrow\rea$.
Given a self-operator $\Ts:\Xc\rightarrow\Xc$, $\fix (\Ts)=\{\xb\in\Xc:\Ts(\xb)=\xb\}$ denotes the set of fixed points and $\zer (\Ts)=\{\xb\in\Xc:\Ts(\xb)=\bzero\}$ denotes the set of zeros.

A \textit{normed space} is a pair $(\Xc,\norm{\cdot})$ where $\Xc$ is a vector space and $\norm{\cdot}$ is a norm on $\Xc$, which induces in the natural way a metric, i.e., a notion of distance: the distance between two vectors $\xb,\yb\in\Xc$ is given by $\norm{\xb-\yb}$.
We will focus on the real vector space $\Xc=\mathbb{R}^n$, thus making any space $(\rea^n,\norm{\cdot})$ a \textit{Banach space} since every finite-dimensional normed vector space is complete (cfr.~\cite[Def. 1.5 and Rem. 2 on page 7]{Berinde07} and~\cite[Theorem 5.33]{Hunter2001}). 
Hilbert spaces are Banach spaces where the inner product is well defined, i.e., $\langle \xb,\xb\rangle =\norm{\xb}^2$ for any $\xb\in\rea^n$.
We will be specifically interested in real vector Banach spaces $(\rea^n,\norm{\cdot}_{\infty,[\etab]^{-1}})$ equipped with diagonally weighted $\ell_{\infty}$ norms defined by positive vectors $\etab\in\rea^n_+$ as follows
$$
\norm{\xb}_{\infty,[\etab]^{-1}} = \norm{[\etab]^{-1}\xb}_{\infty} = \max_{i=1,\cdots,n} \frac{1}{\eta_i} \abs{x_i},
$$
but we will also refer to Hilbert spaces $(\rea^n,\norm{\cdot}_{2,P})$ equipped with weighted $\ell_{2}$ norms defined by positive definite symmetric matrices $P \in \mathbb{S}^n_{\succ 0}$ as follows
$$
\norm{\xb}_{2,P} = \norm{P\xb}_{2} = \sqrt{\xb^\top P^2 \xb}.
$$

We now introduce the basic notions of contractivity and weak contractivity for general Banach spaces.
\begin{defn}
\cite{berinde2021fixed} Given a real Banach space ${(\rea^n,\norm{\cdot})}$, an operator ${\Ts:\rea^n\rightarrow\rea^n}$ is called $\ell\text{-Lipschitz}$ if, for some $\ell \geq0$ and for all $\xb,\yb \in \rea^n$, it satisfies
\begin{equation}\label{eq:lip}
\norm{\Ts(\xb)-\Ts(\yb)} \leq \ell \norm{\xb-\yb}.
\end{equation}
If we let $\Lip(\Ts)$ be the minimum (or infimum) constant $\ell\geq 0$ which satisfies~\eqref{eq:lip}, i.e.,
$$
\Lip(\Ts) := \sup_{\xb\neq \yb} \frac{\norm{\Ts(\xb)-\Ts(\yb)}}{\norm{\xb-\yb}}
$$
then:
\begin{itemize}
    \item $\Ts$ is \virg{$\ell$-contractive} if $\Lip(\Ts) \in(0,1)$;
    \item $\Ts$ is  \virg{weakly contractive} if $\Lip(\Ts)=1$.
\end{itemize}
\end{defn}

\begin{rem}\label{rem:esup}
    For a Lipschitz operator $\Ts$, the Jacobian matrix $D\Ts(\xb)$ exists for almost every $\xb\in\rea^n$ by Rademacher's theorem. Moreover, it holds that
    $$
    \Lip(\Ts) = \esup_{\xb\in\rea^n} \norm{D\Ts(\xb)},
    $$
    where the essential supremum ignores the points in the set of Lebesgue measure zero where $D\Ts(\xb)$ does not exist.
\end{rem}
Next, we propose a generalization of the weak contractivity property called \textit{enriched weak contractivity}.
\begin{defn}\label{def:seNe}
Given a real Banach space $\Bc={(\rea^n,\norm{\cdot})}$, an operator ${\Ts:\rea^n\rightarrow\rea^n}$ is called $(b,c)\text{-enriched weakly contractive}$ if, for some $b\geq0$, $c\in[0,b+1]$ and for all $\xb,\yb \in \rea^n$, it satisfies
\begin{equation}\label{eq:sene}
\norm{b(\xb-\yb)+\Ts(\xb)-\Ts(\yb)} \leq (b-c+1)\norm{\xb-\yb}.
\end{equation}
%
%
\end{defn}
We note that for $c\in(b,b+1]$, the coefficient on the right-hand side $(b-c+1)$ is strictly less than one, thus resulting in a weak form of contractivity.
Enriched weak contractivity generalizes the notion of \textit{enriched nonexpansiveness} proposed by Berinde in~\cite{berinde2021fixed}, which is found as a special case when $c=0$; thus, the coefficient on the right-hand side $b+1$ is greater than or equal to one.
Let us collect some useful special cases in the following remark.
\begin{rem}\label{rem:specialcases}
    A $(b,c)$-enriched weakly contractive operator~is:
    \begin{itemize}
        \item Contractive if $0=b<c<1$;
        \item Weakly contractive if $b=c=0$;
        \item Enriched nonexpansive if $b>c=0$~\cite[Eq.~(16)]{berinde2021fixed}.
    \end{itemize}
\end{rem}

\subsection{Relationship with pseudocontractivity}
Another important generalization of weakly contractive operators is that of strictly pseudocontractive operators~\cite{Belgioioso18}. We formally define this property for Hilbert spaces, as its general form for Banach spaces involves more advanced mathematical tools~\cite{deplano25withapp}.
\begin{defn}\label{def:sPC}
\cite[Definition 4]{Belgioioso18} Given a real Hilbert space ${(\rea^n,\norm{\cdot})}$, an operator ${\Ts:\rea^n\rightarrow\rea^n}$ is called $\kappa$-strictly pseudocontractive if, for some $\kappa\in(0,1)$ and for all ${\xb,\yb \in \rea^n}$, it satisfies
\begingroup
\medmuskip=3mu
\thinmuskip=3mu
\thickmuskip=3mu
\begin{equation*}
\norm{\Ts(\xb)-\Ts(\yb)}^2 \leq  \norm{\xb-\yb}^2 + \kappa \norm{\xb-\yb-(\Ts(\xb)-\Ts(\yb))}^2.
\end{equation*}
\endgroup
%
\end{defn}

Notably, we show that in the case of Hilbert spaces equipped with an inner product $\langle\cdot,\cdot\rangle$, $\kappa$-strict pseudocontractivity is equivalent to $(b,0)$-enriched weak-contractivity with a specific relation between the parameters $\kappa$ and $b$. On the other hand, for general Banach spaces strict pseudocontractivity and enriched weak contractivity are independent properties. 
\begin{prop}\label{prop:eqKPSENE}
    Let $\Hc=(\rea^n,\norm{\cdot})$ be a real Hilbert space. An operator $T:\rea^n\rightarrow\rea^n$ is $\kappa$-strictly pseudocontractive if and only if it is $(b,0)$-enriched weakly contractive with
    $$
    b = \frac{k}{1-k},\quad \text{or equivalently} \quad k=\frac{b}{b+1}.
    $$
\end{prop}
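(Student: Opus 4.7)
The plan is to reduce both characterizations to an identical polynomial inequality in the inner-product invariants of the displacement vectors $u:=x-y$ and $v:=T(x)-T(y)$, exploiting that the Hilbert norm satisfies the parallelogram identity and therefore $\|au+v\|^2 = a^2\|u\|^2 + 2a\langle u,v\rangle + \|v\|^2$ for any scalar $a\in\rea$. Since both defining inequalities are quadratic in $u,v$ and both sides are nonnegative, I can square the enriched weak contractivity inequality without loss of generality and then compare coefficients against the expanded strict pseudocontractivity inequality.

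First I would unfold $(b,0)$-enriched weak contractivity. Using $\|bu+v\|^2\le(b+1)^2\|u\|^2$ and expanding the left-hand side via the inner product, this rewrites as
\begin{equation*}
\|v\|^2 \;\leq\; (2b+1)\|u\|^2 - 2b\langle u,v\rangle.
\end{equation*}
Next I would expand the $\kappa$-strict pseudocontractivity inequality: from $\|v\|^2 \leq \|u\|^2 + \kappa\|u-v\|^2$ I obtain $(1-\kappa)\|v\|^2 \leq (1+\kappa)\|u\|^2 - 2\kappa\langle u,v\rangle$, and since $\kappa\in(0,1)$ I may divide through by $1-\kappa$ to get
\begin{equation*}
\|v\|^2 \;\leq\; \frac{1+\kappa}{1-\kappa}\|u\|^2 - \frac{2\kappa}{1-\kappa}\langle u,v\rangle.
\end{equation*}

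The final step is to match the two coefficient pairs. Setting $b=\kappa/(1-\kappa)$ gives $2b = 2\kappa/(1-\kappa)$ and $2b+1 = (2\kappa + 1-\kappa)/(1-\kappa) = (1+\kappa)/(1-\kappa)$, so the two inequalities coincide term-by-term and hence hold for the same pairs $(x,y)$. Conversely, inverting the relation yields $\kappa = b/(b+1)\in(0,1)$ whenever $b>0$, so both directions of the equivalence are established simultaneously. The bookkeeping to check is that $b=\kappa/(1-\kappa)\geq 0$ with $c=0\leq b+1$ is consistent with Definition~\ref{def:seNe}, and that $\kappa\in(0,1)$ is consistent with Definition~\ref{def:sPC}; these hold under the stated parameter correspondence.

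There is no substantive obstacle here beyond the algebraic matching, which works precisely because in a Hilbert space the square of the norm is a quadratic form and both properties constrain only the three scalars $\|u\|^2$, $\|v\|^2$, $\langle u,v\rangle$. The reason the equivalence fails in general Banach spaces, as the paper notes, is exactly that the squared norm is no longer a quadratic form and the expansion $\|bu+v\|^2 = b^2\|u\|^2+2b\langle u,v\rangle+\|v\|^2$ is unavailable.
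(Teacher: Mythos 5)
Your proof is correct and follows essentially the same route as the paper's: both arguments exploit that in a Hilbert space the squared norm expands as a quadratic form in $\|u\|^2$, $\|v\|^2$, $\langle u,v\rangle$, and both reduce the two definitions to the same inequality $\|v\|^2 \le (2b+1)\|u\|^2 - 2b\langle u,v\rangle$ under the substitution $b=\kappa/(1-\kappa)$. The only cosmetic difference is that you expand both definitions to a common normal form and match coefficients, while the paper transforms one definition into the other by adding and subtracting $b\|v\|^2$; the content is identical.
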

\begin{proof}
The following transformations hold in both directions, where the first inequality is the definition of enriched weak contractivity and the last is the definition of strict pseudocontractivity:
\begingroup
\medmuskip=0mu
\thinmuskip=0mu
\thickmuskip=0mu
$$
\begin{aligned}
\norm{b(\xb-\yb)+\Ts(\xb)-\Ts(\yb)} & \leq (b-c+1)\norm{\xb-\yb},\\
\norm{b(\xb-\yb)+\Ts(\xb)-\Ts(\yb)} & \leq (b+1)\norm{\xb-\yb},\\
\norm{b(\xb-\yb)+\Ts(\xb)-\Ts(\yb)}^2 & \leq (b^2+2b+1)\norm{\xb-\yb}^2,\\
\norm{\Ts(\xb)-\Ts(\yb)}^2
& \leq (2b+1)\norm{\xb-\yb}^2 \\ & \quad -2b\langle \xb-\yb,\Ts(\xb)-\Ts(\yb)\rangle,\\
\norm{\Ts(\xb)-\Ts(\yb)}^2 & \leq (2b+1)\norm{\xb-\yb}^2\\
& \quad -2b\langle \xb-\yb,\Ts(\xb)-\Ts(\yb)\rangle\\
& \: + b \norm{\Ts(\xb)-\Ts(\yb)}^2 -b \norm{\Ts(\xb)-\Ts(\yb)}^2 \\
(b+1)\norm{\Ts(\xb)-\Ts(\yb)}^2 & \leq (b+1)\norm{\xb-\yb}^2 \\
& \quad + b\norm{\xb-\yb-(\Ts(\xb)-\Ts(\yb))}^2,\\
\norm{\Ts(\xb)-\Ts(\yb)}^2 & \leq \norm{\xb-\yb}^2 \\
& \quad + \frac{b}{b+1}\norm{\xb-\yb-(\Ts(\xb)-\Ts(\yb))}^2.\\
\end{aligned}
$$
We conclude that $\Ts$ is $(b,0)$-enriched weakly contractive if and only if it is $\kappa$-strictly pseudocontractive with $\kappa = b/(b+1)$, thus completing the proof.
\endgroup
\end{proof}

\subsection{Verify enriched weak contractivity w.r.t. $\norm{\cdot}_{\infty,[\etab]^{-1}}$}
The following theorem provides a necessary and sufficient condition for enriched weak contractivity of Lipschitz operators.
\begin{thm}\label{thm:inf-coenex}
Let $\Ts:\rea^n\rightarrow\rea^n$ be a Lipschitz operator.
For $b\geq 0$, $\etab\in\rea^n_+$, the following statements are equivalent:
\begin{itemize}
    \item[(i)] $\Ts$ is $(b,c)$-enriched weakly contractive w.r.t. $\norm{\cdot}_{\infty,[\etab]^{-1}}$;
    \item[(ii)] $\abs{b I + D\Ts(\xb)}\etab \leq (b-c+1)\etab$ for all $\xb\in\rea^n$.
\end{itemize}
Let $b^\star$ be the minimum $b$ such that the above hold, then
$$
0\leq b^\star \leq \max\{0,\normalfont{\text{diagL}(-\Ts)}\},
$$
where
\begin{equation}\label{eq:diagL}
\text{\textnormal{diagL}}(-\Ts):=\esup_{\xb\in\rea^n} \max_{i\in\{1,\ldots,n\}} (-\Ts(\xb))_{ii}.
\end{equation}
\end{thm}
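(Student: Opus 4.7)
The plan is to split the statement into the equivalence (i)$\Leftrightarrow$(ii) and the bound on $b^\star$, and to handle the equivalence by reducing enriched weak contractivity to a standard Lipschitz estimate for a shifted operator.

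For the equivalence, I would observe that condition \eqref{eq:sene} with respect to $\norm{\cdot}_{\infty,[\etab]^{-1}}$ is exactly the statement $\Lip(\Ss) \leq b-c+1$ for the shifted operator $\Ss := b\Is + \Ts$, whose Jacobian (wherever defined) is $D\Ss(\xb) = bI + D\Ts(\xb)$. Applying Remark~\ref{rem:esup} converts this Lipschitz bound into the essential supremum of the induced operator norm $\norm{D\Ss(\xb)}_{\infty,[\etab]^{-1}}$. A direct change of variables $\yb = [\etab]^{-1}\xb$ identifies this induced norm with the ordinary $\ell_\infty$ operator norm of $[\etab]^{-1}M[\etab]$, which evaluates entry-wise to $\max_i (\abs{M}\etab)_i/\eta_i$. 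Hence $\norm{M}_{\infty,[\etab]^{-1}}\leq b-c+1$ is equivalent to the componentwise inequality $\abs{M}\etab \leq (b-c+1)\etab$, and substituting $M = bI + D\Ts(\xb)$ closes the loop.

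For the bound $0 \leq b^\star \leq \max\{0, \text{diagL}(-\Ts)\}$, the lower estimate is immediate from the standing constraint $b\geq 0$ in Definition~\ref{def:seNe}. For the upper estimate, the key observation is that whenever $b \geq \text{diagL}(-\Ts)$, every diagonal entry of $bI + D\Ts(\xb)$ is almost-everywhere nonnegative, so entry-wise absolute values leave the diagonal untouched and one has $\abs{bI + D\Ts(\xb)} = bI + \mzr{D\Ts(\xb)}$. Condition (ii) then reduces to $\mzr{D\Ts(\xb)}\etab \leq (1-c)\etab$, and since $\Ts$ is Lipschitz the family of Metzler majorants $\{\mzr{D\Ts(\xb)}\}_{\xb}$ is essentially bounded. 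A Perron--Frobenius / Collatz--Wielandt argument applied to a uniform upper envelope of these Metzler matrices then produces an admissible weight $\etab\in\rea^n_+$ and slack $c\in[0,b+1]$ satisfying the required bound, witnessing the claimed upper estimate for $b^\star$.

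The hardest part will be the upper-bound construction in the second step: extracting a single weight vector $\etab$ that works \emph{uniformly} in $\xb$ from merely essentially-bounded Jacobians. The natural candidate is a Perron eigenvector of an essential supremum envelope of $\mzr{D\Ts(\xb)}$, but verifying that the resulting $c$ indeed lands in the admissible interval $[0,b+1]$ and that the essential-supremum manipulation interacts cleanly with the entry-wise inequality are the delicate points; the equivalence in Part~1, by contrast, is a straightforward computation once Remark~\ref{rem:esup} is invoked.
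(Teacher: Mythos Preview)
Your treatment of the equivalence (i)$\Leftrightarrow$(ii) is exactly the paper's argument: reduce to $\Lip(b\Is+\Ts)\leq b-c+1$, invoke Remark~\ref{rem:esup}, and unwind the weighted $\ell_\infty$ matrix norm into the componentwise inequality $\abs{bI+D\Ts(\xb)}\etab\leq(b-c+1)\etab$. That part is fine.

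The gap is in the bound on $b^\star$. You have misread what is being claimed. In the theorem, $\etab\in\rea^n_+$ is \emph{fixed} from the outset, and $b^\star$ is the minimum $b$ for which (i)/(ii) hold \emph{for that given $\etab$} (and some admissible $c$); the statement implicitly presupposes that at least one such $b$ exists. You instead set out to \emph{construct} an admissible $\etab$ via a Perron--Frobenius / Collatz--Wielandt argument on an envelope of the Metzler majorants. That is not what is required, and in general it cannot work: for arbitrary Lipschitz $\Ts$ there need not exist any $\etab\in\rea^n_+$ for which (ii) holds with $c\in[0,b+1]$ (the paper's own Example~\ref{exmp:counter} exhibits a linear operator that is not enriched weakly contractive with respect to any diagonally weighted $\ell_\infty$ norm). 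So your proposed construction can fail, and the ``delicate points'' you flag are not merely delicate but genuinely obstructed.

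Ironically, you already have the right ingredient. Your observation that for $b\geq\text{diagL}(-\Ts)$ one has $\abs{bI+D\Ts(\xb)}=bI+\mzr{D\Ts(\xb)}$, so that (ii) collapses to the $b$-independent condition $\mzr{D\Ts(\xb)}\etab\leq(1-c)\etab$, is precisely the key step in the paper's proof. The correct way to finish is immediate from this: since the reduced condition does not involve $b$, if (ii) holds for \emph{some} $b'\geq\text{diagL}(-\Ts)$ then it holds for \emph{every} $b\geq\text{diagL}(-\Ts)$, in particular for $b=\max\{0,\text{diagL}(-\Ts)\}$, whence $b^\star\leq\max\{0,\text{diagL}(-\Ts)\}$. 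The complementary case where (ii) already holds for some $b'<\text{diagL}(-\Ts)$ gives the bound trivially. No Perron--Frobenius, no envelope, no construction of $\etab$ is needed.
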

\begin{proof}
    According to the definition of enriched weak contractivity in Definition~\ref{def:seNe}, condition $(i)$ means that the operator $b\Is + \Ts$ is Lipschitz with constant (see also Remark~\ref{rem:esup})
    %
    %
    $$
    \Lip(\Ts) = \esup_{\xb\in\rea^n} \norm{D\Ts(\xb)} = b-c+1.
    $$
    Then, condition $(i)$ is equivalent to
    \begin{equation}\label{eq:suplip}
        \norm{D\Ts(\xb)}_{\infty,[\etab]^{-1}} \leq b-c+1, \text{ for almost every }  \xb\in\rea^n.
    \end{equation}
    Thus, the equivalence $(i)\Leftrightarrow (ii)$ follows from 
    $$
    \begin{aligned}
        \norm{bI + D\Ts(\xb)}_{\infty,[\etab]^{-1}} &\leq b-c+1\\
        \norm{[\etab]^{-1}(bI + D\Ts(\xb))[\etab]}_{\infty} &\leq b-c+1\\
        \abs{[\etab]^{-1}(bI +D\Ts(\xb))[\etab]}\bone &\leq (b-c+1)\bone,\\
        [\etab]^{-1}\abs{bI + D\Ts(\xb)}[\etab]\bone &\leq (b-c+1)\bone,\\
        \abs{bI + D\Ts(\xb)}\etab &\leq (b-c+1)\etab.
    \end{aligned}
    $$
    For the last statement, let $b',c'>0$ be values of $b,c$ such that $(ii)$ holds, and consider the $i$-th row of $(ii)$, namely
    $$
    \abs{b' + (D\Ts(\xb))_{ii}}\eta_i + \sum_{j\neq i}\abs{(D\Ts(\xb))_{ij}}\eta_j \leq (b'-c'+1)\eta_i.
    $$
    It can be noticed that if $b'\geq \text{diagL}(-\Ts)$ the argument of the first absolute value is nonnegative, namely,
    $$
    b'+(D\Ts(\xb))_{ii}\geq b' - \text{diagL}(-\Ts) \geq 0,
    $$
    thus implying that condition $(ii)$ becomes independent from $b'$:
    $$
    (D\Ts(\xb))_{ii}\eta_i + \sum_{j\neq i}\abs{(D\Ts(\xb))_{ij}}\eta_j \leq (1-c)\eta_i.
    $$
    This means that if $(i)$-$(ii)$ hold for any $b'\geq \text{diagL}(-\Ts)$, then they hold also for any other $b\geq \text{diagL}(-\Ts)$, and therefore $b^* \leq \text{diagL}(-\Ts)$. The proof is completed by considering the complementary case $b^\star \leq b'< \text{diagL}(-\Ts)$.
\end{proof}
\begin{cor}\label{cor:inf-conex}
Let $\Ts:\rea^n\rightarrow\rea^n$ be a Lipschitz operator.
For $\etab\in\rea^n_+$, the following statements are equivalent:
\begin{itemize}
    \item[(i)] $\Ts$ is weakly contractive w.r.t. $\norm{\cdot}_{\infty,[\etab]^{-1}}$;
    \item[(ii)] $\abs{D\Ts(\xb)}\etab \leq \etab$ for almost every $\xb\in\rea^n$.
\end{itemize}
\end{cor}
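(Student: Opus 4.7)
The plan is to obtain this corollary as a direct specialization of Theorem~\ref{thm:inf-coenex}. By Remark~\ref{rem:specialcases}, weak contractivity coincides exactly with $(b,c)$-enriched weak contractivity for the choice $b=c=0$. Therefore, condition $(i)$ of the corollary is simply condition $(i)$ of Theorem~\ref{thm:inf-coenex} restricted to these parameter values.

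Substituting $b=0$ and $c=0$ into condition $(ii)$ of Theorem~\ref{thm:inf-coenex}, namely $\abs{bI+D\Ts(\xb)}\etab \leq (b-c+1)\etab$, yields exactly $\abs{D\Ts(\xb)}\etab \leq \etab$, which is condition $(ii)$ of the corollary. The equivalence $(i)\Leftrightarrow(ii)$ in Theorem~\ref{thm:inf-coenex} holds for every admissible $b\geq 0$ and $c\in[0,b+1]$; the pair $(0,0)$ is admissible, so the equivalence is inherited without additional work.

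There is essentially no obstacle here, since the corollary is a one-line specialization. The only thing to double-check is that the ``for almost every $\xb\in\rea^n$'' qualifier on $\abs{D\Ts(\xb)}\etab\leq\etab$ matches the essential-supremum formulation implicit in the proof of Theorem~\ref{thm:inf-coenex} (see Remark~\ref{rem:esup} and equation~\eqref{eq:suplip}); this is immediate because the Jacobian $D\Ts(\xb)$ exists almost everywhere by Rademacher's theorem for Lipschitz $\Ts$, and the row-wise inequality derived in the proof of Theorem~\ref{thm:inf-coenex} is exactly what specializes to the condition on the Jacobian's absolute value acting on $\etab$. Hence a short proof consisting of a single sentence citing Theorem~\ref{thm:inf-coenex} with $b=c=0$ suffices.
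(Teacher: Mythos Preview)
Your proposal is correct and matches the paper's approach exactly: the paper states Corollary~\ref{cor:inf-conex} immediately after Theorem~\ref{thm:inf-coenex} without any proof, treating it as the obvious specialization to $b=c=0$ (as identified in Remark~\ref{rem:specialcases}). Your one-sentence justification citing Theorem~\ref{thm:inf-coenex} with $b=c=0$ is precisely what is intended.
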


\begin{cor}\label{cor:inf-conex2}
Let $\Ts:\rea^n\rightarrow\rea^n$ be a Lipschitz operator.
For $\etab\in\rea^n_+$, the following statements are equivalent:
\begin{itemize}
    \item[(i)] $\Ts$ is contractive w.r.t. $\norm{\cdot}_{\infty,[\etab]^{-1}}$;
    \item[(ii)] $\abs{D\Ts(\xb)}\etab < \etab$ for almost every $\xb\in\rea^n$.
\end{itemize}
\end{cor}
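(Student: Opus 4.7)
The plan is to obtain Corollary~\ref{cor:inf-conex2} as a direct specialization of Theorem~\ref{thm:inf-coenex} at $b=0$, combined with Remark~\ref{rem:specialcases} which identifies $(0,c)$-enriched weak contractivity with $(1-c)$-contractivity whenever $c\in(0,1]$. In particular, I will not re-derive the chain of norm manipulations
$$
\norm{D\Ts(\xb)}_{\infty,[\etab]^{-1}}=\norm{[\etab]^{-1}D\Ts(\xb)[\etab]}_{\infty}=\max_i \sum_j\frac{\abs{(D\Ts(\xb))_{ij}}\eta_j}{\eta_i},
$$
which is already carried out in the proof of Theorem~\ref{thm:inf-coenex}; I simply re-use the resulting equivalence.

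For the implication $(i)\Rightarrow(ii)$, I set $\ell:=\Lip(\Ts)<1$ and observe that the Lipschitz bound is exactly~\eqref{eq:sene} with $b=0$ and $c=1-\ell\in(0,1]$. Applying Theorem~\ref{thm:inf-coenex} with this choice of $(b,c)$ yields $\abs{D\Ts(\xb)}\etab\leq \ell\,\etab$ for almost every $\xb\in\rea^n$, and since $\ell<1$ the entry-wise strict bound $\abs{D\Ts(\xb)}\etab<\etab$ is immediate.

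For the implication $(ii)\Rightarrow(i)$, I read the pointwise strict inequality in the uniform sense natural to the Lipschitz setting, namely as the existence of a constant $c\in(0,1]$ with $\abs{D\Ts(\xb)}\etab\leq(1-c)\etab$ for almost every $\xb$; by Remark~\ref{rem:esup}, Lipschitzness makes $\norm{D\Ts(\xb)}_{\infty,[\etab]^{-1}}$ essentially bounded, so such a uniform gap $c$ can be extracted from the essential supremum. Applying Theorem~\ref{thm:inf-coenex} with $(b,c)=(0,c)$ then delivers $(0,c)$-enriched weak contractivity of $\Ts$, which by Remark~\ref{rem:specialcases} is precisely $(1-c)$-contractivity.

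The only real subtlety, and therefore the step I would flag most carefully, is this uniformity issue in $(ii)\Rightarrow(i)$: a bare pointwise strict inequality $\abs{D\Ts(\xb)}\etab<\etab$ a.e.\ is not enough to force $\Lip(\Ts)<1$ in general (scalar counterexamples with derivative tending to $1$ at infinity show the essential supremum can still equal $1$). Thus the honest reading of $(ii)$ must bake in a uniform gap, consistent with how Corollary~\ref{cor:inf-conex} and Theorem~\ref{thm:inf-coenex} are used elsewhere in the paper; with that reading the corollary reduces to a one-line invocation of the theorem.
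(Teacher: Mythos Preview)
Your approach---specializing Theorem~\ref{thm:inf-coenex} at $b=0$ and invoking Remark~\ref{rem:specialcases}---is exactly the intended one; the paper states the corollary without proof as an immediate consequence of the theorem, and your $(i)\Rightarrow(ii)$ direction is clean and correct.

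Your caution on $(ii)\Rightarrow(i)$ is well placed and worth emphasizing: the corollary as literally stated does have a gap in that direction. A bare pointwise strict inequality $\abs{D\Ts(\xb)}\etab<\etab$ a.e.\ is genuinely insufficient to conclude $\Lip(\Ts)<1$; your scalar intuition is right, and e.g.\ $\Ts(x)=x-\arctan(x)$ on $\rea$ has $\Ts'(x)=x^2/(1+x^2)<1$ everywhere yet $\Lip(\Ts)=1$. So the equivalence only holds under the uniform reading you propose, namely $\abs{D\Ts(\xb)}\etab\leq(1-c)\etab$ a.e.\ for some fixed $c>0$, which is precisely the $b=0$ instance of condition~(ii) in Theorem~\ref{thm:inf-coenex}. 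This is not a flaw in your argument but an imprecision in the corollary's statement that you have correctly diagnosed.
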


\subsection{Relationship with monotone operators w.r.t. $\norm{\cdot}_{\infty,[\etab]^{-1}}$}
In this section, we compare the enriched weak contractivity property to that of monotonicity~\cite[Definition 12.1]{rockafellar2009variational} in Banach spaces of the type $(\rea^n,\norm{\cdot}_{\infty,[\etab]^{-1}})$. In the view of~\cite{davydov2024non}, we report the definition of monotonicity in such spaces and the necessary and sufficient condition for Lipschitz operators.
\begin{defn}\label{def:mono}
\cite[Definition 12 and Equation (5)]{davydov2024non} An operator $\Fs:\rea^n\rightarrow\rea^n$ is called $c$-strongly monotone w.r.t. $\norm{\cdot}_{\infty,[\etab]^{-1}}$ if for all $\xb,\yb \in \rea^n$ it holds
%
\begingroup
\medmuskip=2mu
\thinmuskip=2mu
\thickmuskip=2mu
\begin{equation}\label{eq:mono}
    \min_{i\in I_{\infty}([\etab]^{-1}\yb)}\frac{(\Fs_i(\xb)-\Fs(\yb))(\xb_i-\yb_i)}{\eta_i^2}\geq c\norm{\xb-\yb}_{\infty,[\etab]^{-1}}.
\end{equation}
\endgroup
where $I_{\infty}(\vb) = \{i\in\{1,\ldots,n\}\mid \abs{\vb_i}=\norm{\vb}_{\infty}\}$.
If~\eqref{eq:mono} holds, then for $c=0$, the operator $\Fs$ is called monotone.
\end{defn}
%
%
%
\begin{prop}\label{prop:monoc1}
\cite[Lemma 14]{davydov2024non} A Lipschitz operator ${\Fs:\mathbb{R}^n\rightarrow \mathbb{R}^n}$ is $c$-strongly monotone w.r.t. $\norm{\cdot}_{\infty,[\etab]^{-1}}$ if and only if
\begin{equation}\label{eq:mono_cond}
   \mzr{-D\Fs(\xb)}\etab \leq -c \etab
   \text{ for almost every } \xb \in \rea^n,
\end{equation}
where $\mzr{M}$ is the Metzler majorant of the matrix $M$ as in~\eqref{eq:majorant}.
\end{prop}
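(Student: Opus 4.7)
The proposition is a Demidovich-style equivalence between a global monotonicity inequality and a row-wise infinitesimal condition on the Jacobian. I would establish the two directions by exploiting that the weighted supremum norm is attained at specific coordinates: the forward implication would use a shrinking-increment linearization to extract a pointwise inequality on $D\Fs(\xb)$, while the reverse implication would use the fundamental theorem of calculus along straight-line paths (valid almost everywhere by Rademacher's theorem on Lipschitz operators, cf.\ Remark~\ref{rem:esup}) to lift the pointwise bound to the global one.

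\textbf{Forward direction, $(i)\Rightarrow(ii)$.} At any differentiability point $\xb_0$ of $\Fs$, for any index $i\in\{1,\ldots,n\}$ and any sign pattern $\boldsymbol{\sigma}\in\{-1,+1\}^n$ with $\sigma_i=+1$, I would set $\yb=\xb_0$ and $\xb=\xb_0+\epsilon[\etab]\boldsymbol{\sigma}$ with $\epsilon>0$. Then $\norm{\xb-\yb}_{\infty,[\etab]^{-1}}=\epsilon$ and $i\in I_{\infty}([\etab]^{-1}(\xb-\yb))$, so that $i$ is admissible in the minimum of~\eqref{eq:mono}. Substituting into~\eqref{eq:mono}, Taylor-expanding $\Fs$ at $\xb_0$, dividing by $\epsilon^2$, and letting $\epsilon\to 0^+$ yields
$$(D\Fs(\xb_0))_{ii}\,\eta_i + \sum_{j\neq i}(D\Fs(\xb_0))_{ij}\,\sigma_j\,\eta_j \;\geq\; c\,\eta_i.$$
Since the off-diagonal signs $\sigma_j$ ($j\neq i$) are free, minimising the left-hand side over them forces $\sigma_j=-\operatorname{sign}((D\Fs(\xb_0))_{ij})$, producing $(D\Fs(\xb_0))_{ii}\eta_i-\sum_{j\neq i}\abs{(D\Fs(\xb_0))_{ij}}\,\eta_j\geq c\eta_i$, which by~\eqref{eq:majorant} is exactly the $i$-th row of $\mzr{-D\Fs(\xb_0)}\etab\leq -c\,\etab$.

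\textbf{Reverse direction, $(ii)\Rightarrow(i)$.} Along the segment $\zb(t)=\yb+t(\xb-\yb)$, I would use the representation $\Fs_i(\xb)-\Fs_i(\yb)=\int_0^1\sum_j (D\Fs)_{ij}(\zb(t))(x_j-y_j)\,dt$. For any $i\in I_{\infty}([\etab]^{-1}(\xb-\yb))$, set $r=\norm{\xb-\yb}_{\infty,[\etab]^{-1}}$; then $\abs{x_i-y_i}=r\,\eta_i$ and $\abs{x_j-y_j}\leq r\,\eta_j$ for all $j$. Multiplying by $(x_i-y_i)$ and bounding the off-diagonal cross terms by absolute values gives
$$(\Fs_i(\xb)-\Fs_i(\yb))(x_i-y_i)\;\geq\; r^2\eta_i\int_0^1\Bigl[(D\Fs)_{ii}(\zb(t))\,\eta_i-\sum_{j\neq i}\abs{(D\Fs)_{ij}(\zb(t))}\,\eta_j\Bigr]dt.$$
By assumption~\eqref{eq:mono_cond} the bracketed quantity is at least $c\,\eta_i$ for almost every $t$, so dividing by $\eta_i^2$ and then minimising over $i\in I_{\infty}([\etab]^{-1}(\xb-\yb))$ recovers the bound in Definition~\ref{def:mono}.

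\textbf{Main obstacle.} The delicate step is the reverse direction: the cross terms $(x_j-y_j)(x_i-y_i)$ may be of either sign, and only by bounding them with absolute values does the Metzler majorant $\mzr{-D\Fs}$ emerge naturally on the right-hand side. This also makes the characterization sharp, because the adversarial sign choice used in the forward direction is precisely the worst-case pattern that the absolute-value bound accounts for; the two directions therefore close on the same object.
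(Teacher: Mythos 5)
The paper itself does not prove Proposition~\ref{prop:monoc1}: it is imported verbatim as \cite[Lemma~14]{davydov2024non}, so there is no in-paper argument to compare your attempt against. On its own terms, your reconstruction follows the standard Demidovich-type route and its architecture is the right one: adversarial sign perturbations $\xb=\xb_0+\epsilon[\etab]\boldsymbol{\sigma}$ to extract the row-wise Jacobian inequality, and the mean-value integral along a segment with off-diagonal cross terms bounded in absolute value to lift it back — this is exactly how the Metzler majorant $\mzr{-D\Fs}$ arises, and your forward direction closes correctly. Note, however, that you are silently working with a corrected form of Definition~\ref{def:mono}: you use the index set $I_{\infty}([\etab]^{-1}(\xb-\yb))$ and a squared right-hand side $c\norm{\xb-\yb}^2_{\infty,[\etab]^{-1}}$, whereas \eqref{eq:mono} as printed has $I_{\infty}([\etab]^{-1}\yb)$ and an unsquared right-hand side, under which your division by $\epsilon^2$ and the limit $\epsilon\to 0^+$ would not close. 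That is almost certainly a typo in the paper rather than an error of yours, but a complete proof should say which form of the definition it is proving.

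There is one genuine gap, in the reverse direction. You assert that the bracketed quantity is at least $c\,\eta_i$ \emph{for almost every} $t$ along the segment $\zb(t)=\yb+t(\xb-\yb)$. The hypothesis \eqref{eq:mono_cond} holds for almost every $\xb\in\rea^n$ with respect to Lebesgue measure on $\rea^n$, and for $n\geq 2$ a line segment is a null set, so the hypothesis by itself says nothing about almost every $t$ on that particular segment. The standard repair is a translation-plus-Fubini argument: replace the endpoints by $\xb+\wb$, $\yb+\wb$, which leaves the difference $\xb-\yb$, the norm $r$, and the index set $I_{\infty}([\etab]^{-1}(\xb-\yb))$ unchanged; by Fubini, for almost every $\wb$ the translated segment meets the exceptional set in a one-dimensional null set, so your estimate holds for those $\wb$, and letting $\wb\to 0$ along such directions recovers the inequality for the original pair by continuity of $\Fs$. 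Without some version of this step the sufficiency direction is incomplete as written; with it, the argument is sound.
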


Notably, we show that the class of strongly monotone operators is exactly the same class of enriched weakly contractive operators. However, as discussed in more detail later, enriched weak contractivity is a more insightful property due to the extra parameter $b$.
\begin{thm}\label{thm:mono-ene}
    Let $\Ts:\rea^n\rightarrow\rea^n$ be a Lipschitz operator. Consider the following statements:
    \begin{itemize}
        \item[(i)] $\Ts$ is $(b,c)$-enriched weakly contractive w.r.t. $\norm{\cdot}_{\infty,[\etab]^{-1}}$;
        \item[(ii)] $\Fs:=\Is-\Ts$ is $c$-strongly monotone w.r.t. $\norm{\cdot}_{\infty,[\etab]^{-1}}$.
    \end{itemize}
    Then, the following hold:
    \begin{itemize}
        \item[(a)] $(i) \Rightarrow (ii)$ holds for all $b\geq 0 $;
        \item[(b)] $(i) \Leftarrow  (ii)$ holds for $b\geq \textnormal{diagL}(-\Ts)$.
    \end{itemize}
\end{thm}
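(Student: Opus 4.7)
The plan is to derive both implications by translating conditions (i) and (ii) into row-wise pointwise inequalities on $\etab$ through the characterizations already proved in this section, and then to directly compare the two. First I would apply Theorem~\ref{thm:inf-coenex} to rewrite (i) as
$$
\abs{b+(D\Ts(\xb))_{ii}}\eta_i + \sum_{j\neq i}\abs{(D\Ts(\xb))_{ij}}\eta_j \leq (b-c+1)\eta_i,
$$
valid for almost every $\xb\in\rea^n$ and every $i$. Then, noting that $\Fs=\Is-\Ts$ gives $D\Fs(\xb)=I-D\Ts(\xb)$ and hence $\mzr{-D\Fs(\xb)}=\mzr{D\Ts(\xb)-I}$, I would apply Proposition~\ref{prop:monoc1} to rewrite (ii) as
$$
(D\Ts(\xb))_{ii}\eta_i + \sum_{j\neq i}\abs{(D\Ts(\xb))_{ij}}\eta_j \leq (1-c)\eta_i.
$$
The two conditions are therefore identical except in the treatment of the diagonal term: on one side it appears with an absolute value around $b+(D\Ts(\xb))_{ii}$, on the other it appears bare.

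For implication (a), I would use the trivial bound $x\leq \abs{x}$ applied to $x=b+(D\Ts(\xb))_{ii}$, so that the row-wise form of (i) gives
$$
\bigl(b+(D\Ts(\xb))_{ii}\bigr)\eta_i + \sum_{j\neq i}\abs{(D\Ts(\xb))_{ij}}\eta_j \leq (b-c+1)\eta_i,
$$
and cancelling $b\eta_i$ from both sides yields exactly the row-wise form of (ii). This step uses only $b\geq 0$ and nothing more, establishing (a) in full generality.

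For implication (b), the key observation is that $b\geq \text{diagL}(-\Ts)$ implies $b+(D\Ts(\xb))_{ii}\geq b-\text{diagL}(-\Ts)\geq 0$ for almost every $\xb$ and every $i$, so the absolute value in the row-wise form of (i) collapses: $\abs{b+(D\Ts(\xb))_{ii}}=b+(D\Ts(\xb))_{ii}$. Starting from (ii) and adding $b\eta_i$ to both sides then produces exactly this simplified form, so (i) follows.

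I do not anticipate a real obstacle: all the analytic work has already been absorbed into Theorem~\ref{thm:inf-coenex} and Proposition~\ref{prop:monoc1}, and what remains is pure sign bookkeeping on the diagonal entries. The only subtlety worth spelling out is that the threshold $\text{diagL}(-\Ts)$ is precisely the quantity needed to make the absolute value transparent in the reverse direction, which is also why the same threshold appears in the bound on $b^\star$ in Theorem~\ref{thm:inf-coenex}.
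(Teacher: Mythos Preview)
Your proposal is correct and is essentially the same argument as the paper's: the paper carries out the comparison at the matrix-vector level using the inequality $\mzr{bI+D\Ts(\xb)}\etab \leq \abs{bI+D\Ts(\xb)}\etab$ for (a) and the collapse $\abs{bI+D\Ts(\xb)}=\mzr{bI+D\Ts(\xb)}$ when $b\geq\text{diagL}(-\Ts)$ for (b), which is exactly your row-wise diagonal sign bookkeeping written compactly.
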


\begin{proof}
    We start by recalling the necessary and sufficient condition for $(ii)$ enriched weak contractivity in Theorem~\ref{thm:inf-coenex}:
    $$
    \abs{b I + D\Ts(\xb)}\etab \leq (b-c+1)\etab \text{ for almost every } \xb\in\rea^n,
    $$
    and the necessary and sufficient condition for $(ii)$ strong monotonicity of $\Fs$ in Proposition~\ref{prop:monoc1}:
    $$
    \mzr{-D\Fs(\xb)}\etab \leq -c \etab,\text{ for almost every }\xb \in \rea^n.
    $$
    We prove statement $(a)$ by the following steps:
    \begingroup
    $$
    \begin{aligned}
            \mzr{-D\Fs(\xb)}\etab  &= \mzr{D\Ts(\xb)-I}\etab \\
            &= \mzr{D\Ts(\xb)}\etab - \etab \\
            & = \mzr{D\Ts(\xb)}\etab - \etab + b\etab -  b\etab \\
            & = \mzr{bI+D\Ts(\xb)}\etab - (b+1)\etab \\
            & \leq \abs{bI+D\Ts(\xb)}\etab - (b+1)\etab \leq -c\etab.
    \end{aligned}
    $$
    \endgroup
    %

    Next, we prove statement $(b)$ as follows:
    $$
    \begin{aligned}
    \abs{bI + D\Ts(\xb)}\etab & =  \abs{(b+1)I- D\Fs(\xb)}\etab\\
    & \overset{(i)}{=}  \mzr{(b+1)I- D\Fs(\xb)}\etab\\
    & =  (b+1)\etab + \mzr{- D\Fs(\xb)} \etab\\
    & \leq (b-c+1)\etab
    \end{aligned}
    $$
    \begingroup
\medmuskip=1mu
\thinmuskip=1mu
\thickmuskip=1mu
    where $(i)$ holds for $b\geq \text{diagL}(\Fs)-1 = \text{diagL}(-\Ts)$ as in~\eqref{eq:diagL}.\endgroup
\end{proof}




\subsection{Relationship with monotone systems w.r.t $\norm{\cdot}_{\infty,[\etab]^{-1}}$}

In this section, we also compare enriched weak contractivity with \textit{order-preservation}, which makes discrete-time dynamical system $\xb(k+1)=\Ts(\xb(k))$ monotone~\cite{Angeli03,Deplano23} in the sense of Kamke-Muller~\cite{muller1927fundamentaltheorem,Kamke32}.
%
%
Real vector spaces of finite-dimension can be equipped with the natural order relation $\leq$, yielding ordered vector spaces whose positive cone is the nonnegative orthant ${\mathbb{R}^n_{\geq 0}=\{\xb\in\rea^n\mid \xb\geq 0\}}$.
If between any ${\xb,\yb\in\mathbb{R}^n_{\geq 0}}$ for which there exists an order relation, the operator ${\Ts:\rea^n\rightarrow\rea^n}$ is such that this relation is preserved for their images $\Ts(\xb)$ and $\Ts(\yb)$, then $\Ts$ is said to be \emph{order-preserving}.

\begin{defn}\label{def:op}
\cite[Definition 3]{Deplano20} An operator $\Ts:\mathbb{R}^n\rightarrow \mathbb{R}^n$ is called order-preserving if it holds
\begin{equation}\label{eq:op}
\xb\leq \yb \Rightarrow \Ts(\xb) \leq \Ts(\yb), \qquad \forall\xb,\yb \in \rea^n.
\end{equation}
\end{defn}
\begin{defn}\label{def:opds}\label{def:monosys}
\cite[Definition 1]{Deplano20} A discrete-time dynamical system ${\xb(k+1)=\Ts(\xb(k))}$ is called monotone if the operator $\Ts:\mathbb{R}^n\rightarrow \mathbb{R}^n$ is \virg{order-preserving}.
\end{defn}
\begin{prop}\label{prop:op}
   ~\cite[Theorem 5]{Deplano23} A Lipschitz operator ${\Ts:\rea^n\rightarrow\rea^n}$ is order-preserving if and only if its Jacobian matrix is nonnegative almost everywhere, i.e.,
    $$
    (D\Ts(\xb))_{ij}\geq 0\text{ for almost every } \xb\in\rea^n,\: i,j\in\{1,\ldots,n\}.
    $$
\end{prop}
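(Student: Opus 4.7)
My plan is to prove the two implications separately, relying on Rademacher's theorem (already invoked in the excerpt) to give meaning to the Jacobian almost everywhere, and on a mollification argument to bridge pointwise monotonicity of $\Ts$ with almost-everywhere nonnegativity of $D\Ts$.

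For the $(\Rightarrow)$ direction I would use difference quotients. Fix indices $i,j$ and the standard basis vector $e_j$; for any $h>0$, order-preservation applied to the pair $\xb$ and $\xb+he_j$ yields $\Ts_i(\xb+he_j)-\Ts_i(\xb)\geq 0$. Dividing by $h$ and passing to the limit at any $\xb$ where $\Ts$ is differentiable, which by Rademacher's theorem is a full-measure set, delivers $(D\Ts(\xb))_{ij}\geq 0$.

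For the $(\Leftarrow)$ direction I would mollify. Let $\phi_\varepsilon$ be a standard nonnegative symmetric mollifier on $\rea^n$ and set $\Ts_\varepsilon:=\Ts*\phi_\varepsilon$, so that $\Ts_\varepsilon$ is smooth and $D\Ts_\varepsilon=(D\Ts)*\phi_\varepsilon$ entrywise. Since $(D\Ts(\xb))_{ij}\geq 0$ almost everywhere and $\phi_\varepsilon\geq 0$, the Jacobian $D\Ts_\varepsilon$ is entrywise nonnegative at every point. For any $\xb\leq\yb$, the fundamental theorem of calculus applied along the segment $\gamma(s)=\xb+s(\yb-\xb)$ gives
\[
\Ts_\varepsilon(\yb)-\Ts_\varepsilon(\xb)=\int_0^1 D\Ts_\varepsilon(\gamma(s))\,(\yb-\xb)\,ds \geq \bzero,
\]
since the integrand is the product of an entrywise nonnegative matrix and a nonnegative vector. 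Letting $\varepsilon\downarrow 0$, uniform convergence of $\Ts_\varepsilon$ to $\Ts$ on compact sets, which holds because $\Ts$ is continuous (being Lipschitz), yields $\Ts(\yb)\geq \Ts(\xb)$.

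The main obstacle I expect is exactly in the $(\Leftarrow)$ direction: one cannot directly apply the fundamental theorem of calculus to $\Ts$ along $\gamma$, because the measure-zero set on which $\Ts$ fails to be differentiable could, for a single chosen pair $\xb\leq\yb$, intersect $\gamma$ in a set of positive one-dimensional measure. Mollification sidesteps this by first smoothing the problem, where the identity is trivial, and then transferring the conclusion back to $\Ts$ by continuity; the only step that truly needs care is the identity $D\Ts_\varepsilon=(D\Ts)*\phi_\varepsilon$, which follows from differentiating under the integral once one notes that the Lipschitz map $\Ts$ possesses an essentially bounded weak derivative that agrees with its classical Jacobian almost everywhere.
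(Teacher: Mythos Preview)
Your argument is correct. The forward direction via difference quotients at Rademacher points is the standard one-liner, and for the backward direction you correctly identify the real difficulty (the exceptional set of non-differentiability could swallow a given segment) and resolve it cleanly by mollification: the identity $D\Ts_\varepsilon=(D\Ts)*\phi_\varepsilon$ is exactly the statement that the distributional gradient of a Lipschitz map coincides with its a.e.\ classical Jacobian, and the passage to the limit uses only continuity of $\Ts$.

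As for comparison with the paper: there is nothing to compare. The paper does not prove this proposition; it merely quotes it from an external reference (\cite[Theorem 5]{Deplano23}). Your write-up therefore supplies a self-contained proof where the paper offers none, which is a net gain. If you wanted an alternative to mollification, one could instead argue that a Lipschitz function restricted to any line is absolutely continuous, so the fundamental theorem of calculus applies directly along $\gamma$ with the one-dimensional derivative $\frac{d}{ds}\Ts(\gamma(s))$; a Fubini argument then shows that for almost every direction and almost every base point this derivative equals $D\Ts(\gamma(s))(\yb-\xb)$ a.e.\ in $s$, and a density/continuity step closes the gap. Your mollification route is arguably cleaner, since it avoids that last density maneuver.
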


Monotone dynamical systems are of interest because they enjoy the enriched weak contractivity property w.r.t. a diagonally weighted $\ell_\infty$ if and only if the strict subhomogeneity property holds, which is defined next.
\begin{defn}\label{def:subhomo}
An operator $\Ts:\mathbb{R}^n\rightarrow \mathbb{R}^n$ is called $c\text{-strictly}$ $\etab\text{-subhomogeneous}$ if, for some $c\geq 0$ and for all $\xb\in\rea^n$, it satifies
\begin{equation}\label{eq:subhomo}
    \Ts(\xb+\theta\etab) \leq \Ts(\xb) + \theta(1-c)\etab,\qquad \forall \theta > 0.
\end{equation}
If~\eqref{eq:subhomo} holds with the equality sign, then $\Ts$ is called $\etab$-homogeneous. If~\eqref{eq:subhomo} holds for $c=0$, then $\Ts$ is called $\etab\text{-(sub)homogeneous}$.
\end{defn}
\begin{prop}\label{prop:sh}
A Lipschitz operator ${\Ts:\rea^n\rightarrow\rea^n}$ is $c\text{-strictly}$ $\etab\text{-subhomogeneous}$ if and only if its Jacobian matrix satisfies
$$
D\Ts(\xb)\etab \leq (1-c)\etab,\text{ for almost every } \xb\in\rea^n.
$$
For $c$-strictly $\etab$-homogeneous operators, the necessary and sufficient condition is the above with the strict equality.
\end{prop}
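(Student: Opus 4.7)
My plan is to prove both directions by relating the directional increment $\Ts(\xb+\theta\etab)-\Ts(\xb)$ to the directional derivative $D\Ts(\xb)\etab$, exploiting the fact that Lipschitz operators are differentiable almost everywhere (Rademacher's theorem, cfr. Remark~\ref{rem:esup}) and absolutely continuous along every line segment.

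For the forward direction ($\Rightarrow$), I would fix any $\xb$ at which $D\Ts(\xb)$ exists and apply the definition of $c$-strict $\etab$-subhomogeneity in~\eqref{eq:subhomo} with parameter $\theta>0$. Dividing both sides by $\theta$ yields
\begin{equation*}
\frac{\Ts(\xb+\theta\etab)-\Ts(\xb)}{\theta}\leq (1-c)\etab.
\end{equation*}
Passing to the limit $\theta\to 0^+$, the left-hand side converges to the directional derivative $D\Ts(\xb)\etab$, thus giving the claimed inequality at every point of differentiability, i.e., almost everywhere.

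For the reverse direction ($\Leftarrow$), I would argue by integration along the ray $t\mapsto \xb+t\etab$. Let $E\subset\rea^n$ be the negligible set on which $D\Ts$ fails to satisfy the required inequality (including the points of non-differentiability). By Fubini's theorem, for almost every $\xb\in\rea^n$ the set of $t\in[0,\theta]$ for which $\xb+t\etab\in E$ has one-dimensional Lebesgue measure zero. Since $\Ts$ is Lipschitz, the map $t\mapsto \Ts(\xb+t\etab)$ is absolutely continuous, so the fundamental theorem of calculus yields
\begin{equation*}
\Ts(\xb+\theta\etab)-\Ts(\xb)=\int_0^\theta D\Ts(\xb+t\etab)\etab\,dt\leq \int_0^\theta (1-c)\etab\,dt=\theta(1-c)\etab.
\end{equation*}
This establishes~\eqref{eq:subhomo} for almost every $\xb$; continuity of $\Ts$ extends it to all $\xb\in\rea^n$.

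The $\etab$-homogeneous case is entirely analogous, replacing $\leq$ by $=$ throughout, since both the differentiation step and the integration step preserve equality. I expect the only delicate point to be the Fubini argument ensuring that the integral representation is valid for a generic $\xb$, but this is standard for Lipschitz maps and requires no additional assumption beyond the hypothesis already in place.
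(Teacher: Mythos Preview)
Your proof is correct and follows essentially the same approach as the paper: the forward direction via the limit defining the directional derivative, and the reverse direction via the Newton--Leibniz integral formula along the ray $t\mapsto \xb+t\etab$. You are in fact more careful than the paper about the measure-theoretic justification (the Fubini argument and the continuity extension to all $\xb$), which the paper's proof leaves implicit.
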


\begin{proof}
    The necessity of the condition is proven via the definition of the directional derivative:
    $$
    \begin{aligned}
    D\Ts(\xb)\etab & = \lim_{\theta\rightarrow 0^+} \frac{\Ts(\xb+\theta\etab) - \Ts(\xb)}{\theta}\\
    & \leq \lim_{\theta\rightarrow 0^+} \frac{\Ts(\xb) + \theta(1-c)\etab - \Ts(\xb)}{\theta} = (1-c)\etab.
    \end{aligned}
    $$
    The sufficiency of the condition is proven by the Newton-Leibnitz formula for vector-valued functions:
    $$
    \begin{aligned}
    f(\xb+\theta\etab) - f(\xb) & = \theta\int_0^1 Df(\xb + s\theta\etab) \etab ds\\
    f(\xb+\theta\etab) - f(\xb) & \leq \theta\int_0^1 (1-c)\etab ds\\
    f(\xb+\theta\etab) - f(\xb) & \leq \theta(1-c)\etab.
    \end{aligned}
    $$
\end{proof}

This allows us to prove the following technical result.
\begin{thm}\label{thm:monoKSPC}
~
Let $\Ts:\rea^n\rightarrow\rea^n$ be a Lipschitz operator such that its Jacobian matrix $D\Ts(\xb)$ is Metzler and consider the following statements:
\begin{itemize}
    \item[(i)] $\Ts$ is $(b,c)$-enriched weakly contractive w.r.t. $\norm{\cdot}_{\infty,[\etab]^{-1}}$;
    \item[(ii)] $\Ts$ is $c$-strictly $\etab\text{-subhomogeneous}$.
\end{itemize}
If $b\geq \textnormal{diagL}(-\Ts)$, then the above conditions are equivalent.
\end{thm}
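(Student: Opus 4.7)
The plan is to reduce both conditions to infinitesimal (Jacobian-level) inequalities using the characterizations already established in the paper, and then observe that the Metzler hypothesis combined with the lower bound on $b$ collapses the absolute value in the enriched weak contractivity condition.

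First, I would invoke Theorem~\ref{thm:inf-coenex} to rewrite condition $(i)$ as
$$
\abs{bI+D\Ts(\xb)}\,\etab \leq (b-c+1)\etab \quad \text{for almost every } \xb\in\rea^n,
$$
and invoke Proposition~\ref{prop:sh} to rewrite condition $(ii)$ as
$$
D\Ts(\xb)\,\etab \leq (1-c)\etab \quad \text{for almost every } \xb\in\rea^n.
$$
The equivalence is then entirely a statement about these two matrix inequalities under the standing assumptions that $D\Ts(\xb)$ is Metzler and $b\geq \textnormal{diagL}(-\Ts)$.

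The key observation, which I expect to be the only nontrivial step, is that under these assumptions the matrix $bI+D\Ts(\xb)$ is entrywise nonnegative for almost every $\xb$. The off-diagonal entries of $bI+D\Ts(\xb)$ coincide with those of $D\Ts(\xb)$, which are nonnegative by the Metzler hypothesis. For the diagonal entries, the definition of $\textnormal{diagL}(-\Ts)$ in~\eqref{eq:diagL} gives $-(D\Ts(\xb))_{ii}\leq \textnormal{diagL}(-\Ts)\leq b$ almost everywhere, whence $b+(D\Ts(\xb))_{ii}\geq 0$ for all $i$. Hence $\abs{bI+D\Ts(\xb)}=bI+D\Ts(\xb)$ almost everywhere, and the absolute-value disappears harmlessly.

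Once this is in place, the equivalence is a one-line algebraic manipulation: the inequality in $(i)$ becomes
$$
b\etab + D\Ts(\xb)\etab \leq (b-c+1)\etab,
$$
which, after cancelling $b\etab$ on both sides, is exactly the inequality characterizing $(ii)$. The implications go in both directions with no additional assumption, so the equivalence $(i)\Leftrightarrow (ii)$ follows whenever $b\geq\textnormal{diagL}(-\Ts)$, as claimed.
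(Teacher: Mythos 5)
Your proof is correct and follows essentially the same route as the paper's: both reduce $(i)$ and $(ii)$ to the Jacobian-level inequalities of Theorem~\ref{thm:inf-coenex} and Proposition~\ref{prop:sh}, and both rest on the single observation that the Metzler hypothesis together with $b\geq\textnormal{diagL}(-\Ts)$ makes $bI+D\Ts(\xb)$ entrywise nonnegative, so that $\abs{bI+D\Ts(\xb)}=bI+D\Ts(\xb)$ and the two inequalities coincide after cancelling $b\etab$. If anything, you spell out more explicitly than the paper why the absolute value collapses (off-diagonal entries via Metzler, diagonal entries via the bound on $b$), which is the step the paper compresses into its annotation $(\star)$.
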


\begin{proof}
We prove $(i)\Rightarrow (ii)$ by
\begingroup
$$
\begin{aligned}
        D\Ts(\xb)\etab  &= D\Ts(\xb)\etab +b\etab -  b\etab\\
        &= b\etab + D\Ts(\xb)\etab - b\etab \\
        & \overset{(\star)}{=} \abs{b I + D\Ts(\xb)}\etab - b\etab \\
        & \leq (b-c+1)\etab - b\etab \\
        & \leq (1-c)\etab.
\end{aligned}
$$
\endgroup
where $(\star)$ holds by assumption $b\geq \textnormal{diagL}(-\Ts)$.
%

We prove $(i)\Leftarrow (ii)$  by
$$
\begin{aligned}
\abs{bI + D\Ts(\xb)}\etab & \overset{(\star)}{=}  b\etab + D\Ts(\xb)\etab\\
& \leq b\etab + (1-c)\etab\\
& = (b-c+1)\etab
\end{aligned}
$$
where $(\star)$ holds by assumption $b\geq \textnormal{diagL}(-\Ts)$.

\end{proof}

\begin{cor}\label{cor:monoKSPC}
~
Consider a discrete-time dynamical system ${\xb(k+1)=\Ts(\xb(k))}$ and assume it is monotone.
%
%
Then, the following statements are equivalent:
\begin{itemize}
    \item[(i)] $\Ts$ is $(b,c)$-enriched weakly contractive w.r.t. $\norm{\cdot}_{\infty,[\etab]^{-1}}$;
    \item[(ii)] $\Ts$ is $c$-strictly $\etab\text{-subhomogeneous}$.
\end{itemize}
\end{cor}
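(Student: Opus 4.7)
The plan is to derive Corollary~\ref{cor:monoKSPC} as an immediate specialization of Theorem~\ref{thm:monoKSPC}. Two hypotheses of that theorem need to be discharged from the monotonicity assumption alone: first, that the Jacobian $D\Ts(\xb)$ is Metzler almost everywhere, and second, that the parameter $b$ satisfies $b \geq \textnormal{diagL}(-\Ts)$.

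First, I would invoke Proposition~\ref{prop:op}: because the dynamical system is monotone, $\Ts$ is order-preserving, and hence $(D\Ts(\xb))_{ij} \geq 0$ for all $i,j$ and almost every $\xb\in\rea^n$. In particular the off-diagonal entries are nonnegative, so $D\Ts(\xb)$ is Metzler almost everywhere. This verifies the structural hypothesis of Theorem~\ref{thm:monoKSPC}.

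Second, I would observe that the nonnegativity of the diagonal entries implies
$$
\textnormal{diagL}(-\Ts) = \esup_{\xb\in\rea^n}\max_{i\in\{1,\ldots,n\}}\bigl(-(D\Ts(\xb))_{ii}\bigr) \leq 0.
$$
Since $(b,c)$-enriched weak contractivity requires $b\geq 0$ by Definition~\ref{def:seNe}, the inequality $b \geq \textnormal{diagL}(-\Ts)$ holds automatically for every admissible $b$. Therefore, the remaining hypothesis of Theorem~\ref{thm:monoKSPC} is vacuous in the monotone setting.

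Combining these two observations, the equivalence $(i)\Leftrightarrow (ii)$ follows directly from Theorem~\ref{thm:monoKSPC}, with no further work. There is no real obstacle here; the only subtlety is noticing that monotonicity makes the otherwise restrictive condition $b\geq \textnormal{diagL}(-\Ts)$ trivially satisfied, which is exactly what promotes the one-sided implications of the general theorem into the clean equivalence stated in the corollary.
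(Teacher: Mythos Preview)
Your proposal is correct and matches the paper's intended derivation: the corollary is stated without proof precisely because it follows from Theorem~\ref{thm:monoKSPC} once one observes, via Proposition~\ref{prop:op}, that monotonicity forces $D\Ts(\xb)\geq 0$ (hence Metzler) and $\textnormal{diagL}(-\Ts)\leq 0$, so that $b\geq \textnormal{diagL}(-\Ts)$ holds for every admissible $b\geq 0$. One minor remark: Theorem~\ref{thm:monoKSPC} already asserts a full equivalence under $b\geq\textnormal{diagL}(-\Ts)$, not one-sided implications, so your closing sentence slightly overstates the ``promotion'' taking place.
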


\section{The Krasnoselskij iteration\\ in a Banach space}\label{sec:krasno}

\subsection{Main convergence result}

Consider the Krasnoselskij iteration,
\begin{equation}\label{eq:mainiter}
    \xb(k+1) =\Ts_\theta(\xb(k)) = (1-\theta)\xb(k)+\theta \Ts(\xb(k)),
\end{equation}
with step size $\theta \in(0,1)$. The next lemma provides an upper bound on the maximum value of $\theta$ guaranteeing the convergence of the iteration, provided that the operator $\Ts$ is enriched weakly contractive as in Definition~\ref{def:seNe}.

\begin{lem}\label{lem:lipcond}
Consider a Banach space $(\rea^n,\norm{\cdot})$ and an operator $\Ts:\rea^n\rightarrow \rea^n$. The following statements are equivalent:
\begin{itemize}
 \item $\Ts$ is $(b,c)$-enriched weakly contractive w.r.t. $\norm{\cdot}$;
 \item $\Ts_\theta$ in~\eqref{eq:mainiter} is Lipschitz for $\theta=1/(b+1)$ w.r.t. $\norm{\cdot}$ and with constant $\ell=1-\frac{c}{b+1}$.
\end{itemize}
\end{lem}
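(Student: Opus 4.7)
The plan is to exploit a precise identity: with the specific step size $\theta = 1/(b+1)$, the Krasnoselskij operator $\Ts_\theta$ in~\eqref{eq:mainiter} has an internal structure that matches the left-hand side of the enriched weak contractivity inequality~\eqref{eq:sene} up to a scalar factor. The proof then reduces to a short algebraic manipulation followed by a single application of positive homogeneity of the norm.

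First I would compute $\Ts_\theta(\xb) - \Ts_\theta(\yb)$ from the convex combination defining $\Ts_\theta$, substituting $\theta = 1/(b+1)$ and hence $1-\theta = b/(b+1)$. Factoring out $1/(b+1)$ yields the identity
$$
\Ts_\theta(\xb) - \Ts_\theta(\yb) = \frac{1}{b+1}\bigl[b(\xb-\yb) + \Ts(\xb) - \Ts(\yb)\bigr],
$$
and taking norms gives
$$
\norm{\Ts_\theta(\xb) - \Ts_\theta(\yb)} = \frac{1}{b+1}\norm{b(\xb-\yb) + \Ts(\xb) - \Ts(\yb)}.
$$

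For the direction ($\Rightarrow$), I would insert the enriched weak contractivity bound~\eqref{eq:sene} into the right-hand side, producing the Lipschitz estimate on $\Ts_\theta$ with constant $(b-c+1)/(b+1) = 1 - c/(b+1) = \ell$, which is precisely the claim. For ($\Leftarrow$), the same identity is read in reverse: multiplying the Lipschitz bound on $\Ts_\theta$ with constant $\ell = 1 - c/(b+1)$ through by $b+1$ recovers~\eqref{eq:sene} verbatim, yielding $(b,c)$-enriched weak contractivity of $\Ts$.

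No genuine obstacle is foreseen: the argument is a pure algebraic rearrangement that uses only the definition of $\Ts_\theta$ and positive homogeneity of the norm. The only delicate point is the bookkeeping between the parameters $(b,c)$ appearing in the enriched weak contractivity inequality and the paired quantities $\theta = 1/(b+1)$ and $\ell = 1-c/(b+1)$, which must align exactly for the two sides of the equivalence to match.
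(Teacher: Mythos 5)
Your proposal is correct and follows essentially the same route as the paper: both proofs divide the enriched weak contractivity inequality by $b+1$, use positive homogeneity to pull the factor inside the norm, and recognize the resulting expression as $\Ts_{1/(b+1)}(\xb)-\Ts_{1/(b+1)}(\yb)$, with every step reversible. The parameter bookkeeping $(b-c+1)/(b+1)=1-c/(b+1)$ is handled identically in both arguments.
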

\begin{proof}
The following transformations hold in both directions, where the first inequality is the definition of enriched weak contractivity and the last is the definition of Lipschitzianity for $\Ts_{\theta}$ with $\theta = 1/(b+1)$:
\begingroup
\medmuskip=2mu
\thinmuskip=2mu
\thickmuskip=2mu
\allowdisplaybreaks
\begin{align*}
\norm{b(\xb-\yb)+\Ts(\xb)-\Ts(\yb)} & \leq (b-c+1)\norm{\xb-\yb},\\
\frac{1}{b+1}\norm{b(\xb-\yb)+\Ts(\xb)-\Ts(\yb)} & \leq \frac{b-c+1}{b+1}\norm{\xb-\yb},\\
\norm{\frac{b(\xb-\yb)+\Ts(\xb)-\Ts(\yb)}{b+1}} & \leq \left(1-\frac{c}{b+1}\right)\norm{\xb-\yb},\\
\norm{\frac{b\xb+\Ts(\xb)}{b+1} - \frac{b\yb+\Ts(\yb)}{b+1}} & \leq \left(1-\frac{c}{b+1}\right)\norm{\xb-\yb},\\
\norm{\Ts_{\frac{1}{b+1}}(\xb) - \Ts_{\frac{1}{b+1}}(\yb)} & \leq \left(1-\frac{c}{b+1}\right)\norm{\xb-\yb}.
\end{align*}
\endgroup
\end{proof}

%

%

%

We now state and prove the main result of the paper.
\begin{thm}\label{thm:nonlin-bene}
    Consider an operator $\Ts$ that is $(b,c)$-enriched weakly contractive w.r.t. $\norm{\cdot}_{\infty,[\etab]^{-1}}$
    for some $\etab \in \mathbb{R}^n_{+}$, $b> 0$, $c\geq 0$. Then the following holds:
    \begin{itemize}
        \item[(a)] If $c=0$ and ${\fix(\Ts)\neq \emptyset}$, the iteration in~\eqref{eq:mainiter} converges for ${\theta\in\left(0,\frac{1}{b+1}\right)}$ to some fixed point $\bar{\xb}\in \fix (\Ts)$;
        \item[(b)] If $c>0$, the iteration in~\eqref{eq:mainiter} converges to the unique fixed point $\fix(\Ts)=\{x^\star\}$ for ${\theta\in\left(0,\frac{1}{b+1}\right]}$. Moreover, it holds that
        $$
        \norm{\xb(k+1)-\xb^\star}\leq \left(1-\theta c\right)\norm{\xb(k)-\xb^\star}, \quad \forall k\in\nat.
        $$
        %
    \end{itemize}
    %
\end{thm}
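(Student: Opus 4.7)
The plan is to reduce the iteration at any admissible step size to an already-controlled Lipschitz operator via a convex-combination trick, and then split into the strictly contractive and merely nonexpansive regimes. By Lemma~\ref{lem:lipcond}, $\Ts_{1/(b+1)}$ is Lipschitz with constant $\ell=1-c/(b+1)$ in $\norm{\cdot}_{\infty,[\etab]^{-1}}$. For any $\theta\in(0,1/(b+1)]$, I would set $\alpha=\theta(b+1)\in(0,1]$ and check the algebraic identity
$$
\Ts_\theta=(1-\alpha)\Is+\alpha\,\Ts_{1/(b+1)},
$$
which, by the triangle inequality, gives $\Lip(\Ts_\theta)\le(1-\alpha)+\alpha\ell=1-\theta c$. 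A one-line computation also shows $\fix(\Ts_\theta)=\fix(\Ts)$ for every $\theta>0$.

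For part (b), the hypothesis $c>0$ turns the previous bound into $\Lip(\Ts_\theta)\le 1-\theta c<1$, so $\Ts_\theta$ is a strict contraction on the complete normed space $(\rea^n,\norm{\cdot}_{\infty,[\etab]^{-1}})$. Banach's fixed-point theorem then delivers at once existence and uniqueness of $\xb^\star\in\fix(\Ts_\theta)=\fix(\Ts)$, global convergence of $\{\xb(k)\}$, and the announced linear rate $(1-\theta c)$.

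Part (a) is the delicate case, since with $c=0$ one only obtains $\Lip(\Ts_\theta)\le 1$ and Banach is no longer applicable. My plan is to first pick any $\bar\xb\in\fix(\Ts)$ and use nonexpansiveness to conclude that $\norm{\xb(k)-\bar\xb}_{\infty,[\etab]^{-1}}$ is non-increasing (Fej\'er monotonicity), so $\{\xb(k)\}$ is bounded and, by finite-dimensionality, relatively compact; then prove asymptotic regularity $\xb(k+1)-\xb(k)=\theta\bigl(\Ts(\xb(k))-\xb(k)\bigr)\to 0$; conclude from the continuity of $\Ts$ that every cluster point is in $\fix(\Ts)$; and finally invoke Fej\'er monotonicity with respect to that cluster point to upgrade subsequential convergence to convergence of the full sequence. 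The main obstacle will be the asymptotic-regularity step, because $(\rea^n,\norm{\cdot}_{\infty,[\etab]^{-1}})$ is not uniformly convex, so the parallelogram-law argument that yields asymptotic regularity in Hilbert spaces is unavailable, and the literature review explicitly warns that strict pseudocontractivity alone is not sufficient for convergence in this non-uniformly-convex setting. To get past it I would combine the strict averaging $\alpha=\theta(b+1)\in(0,1)$ with the coordinatewise bound $\abs{bI+D\Ts(\xb)}\etab\le(b+1)\etab$ of Theorem~\ref{thm:inf-coenex}: at any index that saturates the weighted $\infty$-norm of $\xb(k)-\bar\xb$, this bound forces a strict decrease of that coordinate unless $\xb(k)$ is already a fixed point, and a compactness/contradiction argument on the accumulated deficit along a putative subsequence with $\norm{\Ts(\xb(k))-\xb(k)}_{\infty,[\etab]^{-1}}\ge\delta>0$ then contradicts the monotone convergence of $\norm{\xb(k)-\bar\xb}_{\infty,[\etab]^{-1}}$, delivering the required asymptotic regularity.
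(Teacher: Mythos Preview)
Your reduction $\Ts_\theta=(1-\alpha)\Is+\alpha\,\Ts_{1/(b+1)}$ with $\alpha=\theta(b+1)$ and your treatment of part~(b) via Banach's theorem match the paper exactly; in fact your one-line triangle-inequality derivation of the rate $1-\theta c$ is cleaner than the paper's, which reaches the same bound through a longer coordinatewise Jacobian computation that unpacks $\norm{D\Ts_\theta(\xb)}_{\infty,[\etab]^{-1}}$ and appeals to Lemma~\ref{lem:lipcond}. The genuine divergence is in part~(a): the paper does not attempt any asymptotic-regularity argument but simply observes that $\Ts_\theta$ is an $\alpha$-averaging (with $\alpha\in(0,1)$) of the nonexpansive map $\Ts_{1/(b+1)}$ and invokes Ishikawa's 1976 theorem for general Banach spaces (also recorded as \cite[Lemma~11]{davydov2024non}), which yields convergence to a fixed point immediately. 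Your plan---Fej\'er monotonicity, a coordinatewise strict-decrease analysis at the saturating index, and a compactness/contradiction step---is essentially a rederivation of Ishikawa's result specialized to the weighted $\ell_\infty$ setting; it is workable but substantially more labor, and the claim ``strict decrease of that coordinate unless $\xb(k)$ is already a fixed point'' needs care, since nonexpansiveness only forbids increase, not stagnation, at a single saturating index. Citing Ishikawa buys brevity and sidesteps that delicate step; your route, if fully carried out, would make the argument self-contained.
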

\begin{proof}
    The iteration can be rewritten as follows
    $$
    \begin{aligned}
        \Ts_\theta &= (1-\theta)\Is+\theta \Ts = (1-\theta)\Is+\theta \left((b+1)\Ts_{\frac{1}{b+1}}-b\Is\right)\\
          & = (1-\theta(b+1))\Is+\theta (b+1)\Ts_{\frac{1}{b+1}}\\
          & = (1-\alpha)\Is + \alpha\Ts_{\frac{1}{b+1}},\quad \text{with} \quad \alpha =  \theta(b+1).
    \end{aligned}
    $$
    Due to Lemma~\ref{lem:lipcond}, the operator $\Ts_{\frac{1}{b+1}}$ ruling the iteration in~\eqref{eq:mainiter} is either weakly contractive (if $c=0$) or contractive (if $c=0$) with contraction factor $1-c/(b+1)$ or .
    If $c=0$, by the known result of Ishikawa~\cite{ishikawa76} which can also be found in~\cite[Lemma 11]{davydov2024non}, the iteration converges to a fixed point (if any exist) for any $\alpha\in(0,1)$, i.e., $\theta\in(0,1/(b+1))$, thus proving $(a)$.
    If $c>0$, by the known Banach fixed point theorem, the iteration converges to the unique fixed point and the convergence rate is given by 
    \begingroup
    \medmuskip=1mu
    \thinmuskip=1mu
    \thickmuskip=1mu
    \allowdisplaybreaks
    \begin{align*}
    & \norm{D\Ts_{\theta}(\xb)}_{\infty,[\etab]^{-1}} = \norm{(1-\theta) I + \theta D\Ts(\xb)}_{\infty,[\etab]^{-1}}\\
    & = \max_{i}\left\{\abs{1-\theta + \theta (D\Ts(\xb))_{ii}} + \theta \sum_{j\neq i} \abs{(D\Ts(\xb))_{ij}}\frac{\eta_j}{\eta_i}\right\}\\
    & = \max_{i}\left\{\abs{1-\theta+\theta(D\Ts(\xb))_{ii}+\alpha-\alpha} + \theta \sum_{j\neq i} \abs{(D\Ts(\xb))_{ij}}\frac{\eta_j}{\eta_i}\right\}\\
    & \overset{(i)}{\leq} 1-\alpha + \theta\max_{i}\left\{\abs{((D\Ts(\xb))_{ii}+b)} + \sum_{j\neq i} \abs{(D\Ts(\xb))_{ij}}\frac{\eta_j}{\eta_i}\right\}\\
    & \overset{(ii)}{=} 1-\alpha + \theta(b+1)\norm{D\Ts_{\frac{1}{b+1}}}_{\infty,[\etab]^{-1}}\\
    & \overset{(iii)}{\leq} 1-\alpha + \theta(b+1)  \left(1-\frac{c}{b+1}\right)\\
    & \overset{}{=} 1-\alpha + \theta(b+1)  -\theta c\\
    & \overset{(iv)}{=} 1-\frac{\alpha c}{b+1} = 1-\theta c.
    \end{align*}
    \endgroup
    We note that $(i)$ holds because since $1-\alpha$ with $\alpha\in(0,1)$ is always positive, it can be moved outside the absolute value by the triangle inequality; also, $(i)$ uses the identity ${\alpha-\theta=\theta b}$, thus allowing to take $\theta$ as a common factor; $(ii)$ holds by definition of the diagonally weighted $\ell_{\infty}$ norm and by the definition of ${\Ts_{\frac{1}{b+1}} = (b\cdot \Is +\Ts)/(b+1)}$; $(iii)$ holds by Lemma~\ref{lem:lipcond} which ensures that the Lipschitz constant of $\Ts_{\frac{1}{b+1}}$ is $1-c/(b+1)$; $(iv)$ holds by the identity ${\alpha= \theta(b+1)}$. 
    %
    %
\end{proof}

%



\subsection{The special case of linear operators: A comparison with strict pseudocontractive operators}
We start our discussion by recalling a recent result by Belgioioso \textit{et al.}~\cite{Belgioioso18} in the case of linear operators and Hilbert spaces $(\rea^n,\norm{\cdot}_{2,P})$.
Under these assumptions, the iteration converges if and only if the map $\Ts$ is $k$-strictly pseudocontractive which, in Hilbert spaces, is equivalent to being $(b,0)$-enriched weakly contractive, by Proposition~\ref{prop:eqKPSENE}.
Let us formally report this result in view of Proposition~\ref{prop:eqKPSENE}.
\begin{prop}\label{prop:belgio}
   ~\cite[Theorem 1]{Belgioioso18} Consider the iteration
    $$
    \xb(k+1) = (1-\theta)\xb(k)+\theta A\xb(k),
    $$
    For $b\geq 0$, the following statements are equivalent:
    \begin{itemize}
        \item $\exists P \in \mathbb{S}^n_{\succ 0}:$ $\As:\xb\mapsto A\xb$ is $(b,0)$-enriched weakly contractive w.r.t.~$\norm{\cdot}_{2,P}$;
        \item $\xb(k)$ converges to $\fix(\As)$ for ${\theta\in\big(0,\frac{1}{b+1}\big)}$.
    \end{itemize}

\end{prop}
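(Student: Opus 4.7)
The plan is to reduce the statement to the known result~\cite[Theorem 1]{Belgioioso18} by exploiting the equivalence between $(b,0)$-enriched weak contractivity and $\kappa$-strict pseudocontractivity established in Proposition~\ref{prop:eqKPSENE}. The key observation is that, for $P\in\mathbb{S}^n_{\succ 0}$, the weighted norm $\norm{\cdot}_{2,P}$ is induced by the inner product $\langle \xb,\yb\rangle_P := \xb^\top P^2 \yb$, so $(\rea^n,\norm{\cdot}_{2,P})$ is a genuine real Hilbert space. Hence the hypotheses of Proposition~\ref{prop:eqKPSENE} are satisfied, and $\As$ is $(b,0)$-enriched weakly contractive w.r.t.\ $\norm{\cdot}_{2,P}$ if and only if it is $\kappa$-strictly pseudocontractive w.r.t.\ the same norm with parameter $\kappa = b/(b+1)$.

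Next, I would translate the step-size range between the two formulations. Since $\kappa=b/(b+1)$, one immediately obtains $1-\kappa = 1/(b+1)$, so the admissible interval $\theta\in(0,1-\kappa)$ appearing in~\cite[Theorem 1]{Belgioioso18} matches exactly the interval $\theta\in\bigl(0,\tfrac{1}{b+1}\bigr)$ claimed in the proposition. With this identification, the two statements of Proposition~\ref{prop:belgio} are placed in exact bijection with the two statements of~\cite[Theorem 1]{Belgioioso18}: existence of $P\in\mathbb{S}^n_{\succ 0}$ such that $\As$ is $(b,0)$-enriched weakly contractive w.r.t.\ $\norm{\cdot}_{2,P}$ corresponds to the existence of a weighted $\ell_2$ norm under which $\As$ is $\kappa$-strictly pseudocontractive, and the convergence clause is unchanged.

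Invoking~\cite[Theorem 1]{Belgioioso18} then closes both implications simultaneously. The only subtlety worth checking carefully is that the existential quantifier over $P$ is preserved by the equivalence in Proposition~\ref{prop:eqKPSENE}, i.e.\ that the same weighting $P$ that realizes enriched weak contractivity also realizes strict pseudocontractivity; this is automatic because the algebraic manipulations in the proof of Proposition~\ref{prop:eqKPSENE} are carried out pointwise in a fixed norm, so the witness $P$ transfers unchanged between the two descriptions. No additional analytic argument is required: the main content of the proof is precisely this bookkeeping, and the result follows by direct appeal to Proposition~\ref{prop:eqKPSENE} and~\cite[Theorem 1]{Belgioioso18}.
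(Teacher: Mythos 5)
Your proposal is correct and follows exactly the route the paper intends: the paper states Proposition~\ref{prop:belgio} as a direct reformulation of~\cite[Theorem 1]{Belgioioso18} via the equivalence of Proposition~\ref{prop:eqKPSENE}, with $\kappa=b/(b+1)$ giving $1-\kappa=1/(b+1)$, and offers no further argument. Your additional check that the witness $P$ transfers unchanged (since $\norm{\cdot}_{2,P}$ is induced by the inner product $\langle\xb,\yb\rangle_P=\xb^\top P^2\yb$) is a sensible piece of bookkeeping that the paper leaves implicit.
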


On the contrary, we show that enriched weak contractivity w.r.t. a diagonally weighted $\ell_{\infty}$ norm is only sufficient for the convergence of the iteration, but not necessary. We do so by providing a counter-example for a weakly contracting matrix, i.e., a $(0,0)$-enriched weakly contracting matrix.

Theorem~\ref{thm:inf-coenex} and Corollary~\ref{cor:inf-conex} provide a method to verify (enriched) weak contractivity of an operator w.r.t. a diagonal weighted $\ell_{\infty}$.
In the case of linear operators ${\As:\xb\mapsto A\xb}$, weak contractivity w.r.t. a diagonally weighted $\ell_{\infty}$ norm reduces to a linear feasibility program (LP) of the kind ${\abs{A}\etab \leq \etab}$; 
on the other hand, weak contractivity w.r.t. a weighted $\ell_2$ is a semidefinite feasibility program (SDP) of the kind ${A^\top P A \preccurlyeq P}$~\cite[Lemma 3]{Belgioioso18}. 
An important consideration due to~\cite[Lemma 3]{Belgioioso18} is that the SDP feasibility is a necessary and sufficient condition for (at most marginal) stability of the system $\xb(k+1)=A\xb(k)$, i.e., all eigenvalues of $A$ are within the unit disk and those on the boundary are semi-simple. 
On the contrary, the next example shows that the LP feasibility is not necessary, while it is sufficient as proved in the following Proposition \ref{prop:suff}.

\begin{exmp}\label{exmp:counter}
Consider the matrix:
$$
A = \begin{bmatrix}
    1 & 1\\ 0 & a
\end{bmatrix}, \quad \text{with} \quad a\in(0,1).
$$
The eigenvalues $\{a,1\}$ of $A$ are semi-simple and within the unit disk, thus the iteration $\xb(k+1)=A\xb(k)$ converges. However, the operator $\As:\xb\rightarrow A\xb$ is not weakly contractive w.r.t. any diagonally weighted $\ell_{\infty}$ norm because the LP program $\abs{A}\etab \leq \etab$ is not feasible for $\etab\in\rea^n_+$:
$$
\begin{cases}
A\etab \leq \etab\\
\etab > \bzero
\end{cases} \Leftrightarrow \begin{cases}
    \eta_1 + \eta_2 \leq \eta_1\\
    a\eta_2 \leq \eta_2\\
    \eta_1 > 0\\
    \eta_2 > 0
\end{cases} \Leftrightarrow \begin{cases}
    \eta_2 \leq 0\\
    a \leq 1\\
    \eta_1 > 0\\
    \eta_2 > 0
\end{cases}.
$$
\end{exmp}

The above example shows that weak contractivity w.r.t. a diagonally weighted $\ell_{\infty}$ norm is not necessary for (at least marginal) stability of a linear time-invariant system. In the next proposition, we show that it is sufficient.

\begin{prop}\label{prop:suff}
Let ${\As:\xb\mapsto A\xb}$ be a linear operator. Weak contractivity of $\As$ w.r.t. $\norm{\cdot}_{\infty,[\etab]^{-1}}$ implies that
\begin{itemize}
    \item[($\star$)] All eigenvalues of $A$ are in the unit disk and each eigenvalue on the boundary of the unit disk is semi-simple.
\end{itemize} 
Vice versa, the above condition does not imply weak contractivity of $\As$ w.r.t. $\norm{\cdot}_{\infty,[\etab]^{-1}}$.
\end{prop}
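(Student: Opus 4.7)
The plan is to exploit the standard connection between induced operator norms and the spectrum. For a linear operator $\As:\xb\mapsto A\xb$, Corollary~\ref{cor:inf-conex} (with $D\As(\xb)\equiv A$) shows that weak contractivity w.r.t. $\norm{\cdot}_{\infty,[\etab]^{-1}}$ is equivalent to the induced matrix-norm bound $\norm{A}_{\infty,[\etab]^{-1}}\leq 1$, which I will use together with submultiplicativity throughout. Because $A$ is real, this norm extends componentwise by modulus to $\mathbb{C}^n$ and its induced operator norm takes the same value on $\mathbb{R}^n$ and $\mathbb{C}^n$, so any complex eigenpair may be plugged in directly.

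For the spectral bound, I would pick any eigenpair $(\lambda,\vb)$ of $A$ and write $\abs{\lambda}\norm{\vb}_{\infty,[\etab]^{-1}} = \norm{A\vb}_{\infty,[\etab]^{-1}} \leq \norm{A}_{\infty,[\etab]^{-1}}\norm{\vb}_{\infty,[\etab]^{-1}} \leq \norm{\vb}_{\infty,[\etab]^{-1}}$, whence $\abs{\lambda}\leq 1$. For semi-simplicity on the unit circle I would argue by contradiction: if some $\lambda$ with $\abs{\lambda}=1$ admitted a Jordan block of size at least two, then there would exist vectors $\vb,\wb$ satisfying $A\vb=\lambda\vb$ and $A\wb=\lambda\wb+\vb$, giving $A^k\wb=\lambda^k\wb+k\lambda^{k-1}\vb$ for every $k\in\nat$. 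The norm of the right-hand side grows at least linearly in $k$, contradicting the uniform bound $\norm{A^k\wb}_{\infty,[\etab]^{-1}}\leq \norm{A}_{\infty,[\etab]^{-1}}^{k}\norm{\wb}_{\infty,[\etab]^{-1}}\leq \norm{\wb}_{\infty,[\etab]^{-1}}$ obtained by iterated submultiplicativity.

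The failure of the converse is already witnessed by Example~\ref{exmp:counter}: the matrix $A=\begin{bmatrix}1&1\\0&a\end{bmatrix}$ with $a\in(0,1)$ has simple eigenvalues $\{1,a\}$, both in the closed unit disk, with the unit-modulus eigenvalue $1$ automatically semi-simple; yet the LP $\abs{A}\etab\leq\etab$ with $\etab>\bzero$ was shown there to be infeasible, so $\As$ is not weakly contractive w.r.t.\ any diagonally weighted $\ell_\infty$ norm. I do not foresee serious obstacles; the step deserving the most care is the Jordan block argument, where one must be mindful of the passage between $\mathbb{R}^n$ and $\mathbb{C}^n$, but as noted this is harmless here because $A$ is real and the $\ell_\infty$-type induced operator norm coincides on the two spaces.
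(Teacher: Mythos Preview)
Your proof is correct and follows essentially the same approach as the paper: both reduce weak contractivity to the bound $\norm{A}_{\infty,[\etab]^{-1}}\leq 1$, deduce $\rho(A)\leq 1$, rule out defective unit-modulus eigenvalues by a boundedness-versus-growth contradiction, and dispose of the converse via Example~\ref{exmp:counter}. The only cosmetic difference is in the spectral-radius step: the paper obtains $\rho(A)\leq 1$ by observing that $[\etab]^{-1}\abs{A}[\etab]$ is row-substochastic and invoking $\rho(A)\leq\rho(\abs{A})$, whereas you apply the eigenpair inequality $\abs{\lambda}\leq\norm{A}$ directly.
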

    
\begin{proof}
We first note that weak contractivity is equivalent to $(0,0)$-enriched weak contractivity, which holds if and only if (Corollary~\ref{cor:inf-conex2})
\begin{equation}\label{eq:rs}
\abs{A}\etab\leq \etab
\end{equation}
When $\Ts=\As$ is a linear operator ${\As:\xb\rightarrow A\xb}$, then $D\Ts(\xb) = D\As(\xb) = A$ for all ${\xb\in\rea^n}$.
Consider the similarity transformation given by $N=[\etab]$, i.e., ${A_N = N^{-1}\abs{A}N}$. By construction, matrices $\abs{A}$ and $A_N$ have the same eigenvalues and are both nonnegative.
Exploiting~\eqref{eq:rs}, it can be show that $A_N$ is row-substochastic,
$$
A_N\bone = N^{-1}\abs{A}N\bone  = N^{-1}\abs{A}\etab \leq N^{-1}\etab = \bone,
$$
and, in turn, the eigenvalues of $A$ are in the unit disk:
$$
\rho(A) \leq \rho(\abs{A}) = \rho(A_N) \leq \max(A_N\bone) \leq \max(\bone) \leq 1,
$$
For the sake of contradiction, assume there is an eigenvalue with magnitude $1$ that is not semi-simple, i.e., its algebraic multiplicity is strictly greater than its geometric multiplicity. This implies that the system ${\xb(k)=A\xb(k-1)}$ is unstable, i.e., at least one component of $\xb(k)$ grows unbounded as $k\rightarrow\infty$.
w%
This contradicts the assumption that the linear operator $\As$ is weakly contractive because it implies that $\xb(k)$ must remain bounded. Indeed, weak contractivity of $\As$ is equivalent to ${\norm{A}_{\infty,[\etab]^{-1}} \leq 1}$ as in~\eqref{eq:suplip} and therefore
$$
\begin{aligned}
    \lim_{k\rightarrow\infty} \norm{\xb(k)}_{\infty,[\etab]^{-1}} &\leq \lim_{k\rightarrow\infty} \norm{A^k\xb(0)}_{\infty,[\etab]^{-1}} \\
    & = \lim_{k\rightarrow\infty} \norm{A}^k_{\infty,[\etab]^{-1}} \norm{\xb(0)}_{\infty,[\etab]^{-1}}\\
    & = \norm{\xb(0)}_{\infty,[\etab]^{-1}}. 
\end{aligned}
$$
This proves that weak contractivity of $\As$ implies ($\star$). The contrary does not hold by the counterexample discussed in Example~\ref{exmp:counter}, thus completing the proof.
\end{proof}

\subsection{A comparison with monotone operators and monotone dynamical systems}\label{sec:monotone}

By Theorem~\ref{thm:mono-ene}, we know that $\Fs$ is $c$-strongly monotone, then $\Ts = \Is-\Fs$ is surely $(b,c)$-enriched weakly contractive for any $b\geq\text{diagL}(-\Ts)$.
Thus, the value $\text{diagL}(-\Ts)$ works as an upper bound to minimum value $b^\star$ for which the $\Ts$ is enriched weakly contractive. This is coherent with Theorem~\ref{thm:inf-coenex}. 
This means that the information provided by the monotonicity property is somehow less than that provided by enriched weak contractivity, in the sense that monotonicity of $\Fs$ allows to find an upper bound on the constant of enriched weak contractivity of $\Ts$, but lower values may be still valid.
We provide an example illustrating this case, which is the most common.
%

\begin{exmp}\label{exmp:largerss}
Consider the matrix
\begin{equation}\label{eq:largerss}
A = \frac{1}{2}\begin{bmatrix*}[r]
    -3&     0&    1&     -3\\
    3&    -15&    -12&     -1\\
    2&     -1&    -5&     -5\\
    -2&     0&     -1&    -6
\end{bmatrix*}.
\end{equation}
Consider the Krasnoselskij iteration:
$$
\xb(k+1) = (1-\theta)\xb(k) + \theta A\xb(k).
$$
According to Theorem~\ref{thm:inf-coenex}, the operator $\As:\xb\mapsto A\xb$ is $(b,0)$-enriched weakly contractive w.r.t. $\norm{\cdot}_{\infty,[\etab]^{-1}}$ if and only ${\abs{bI+A}\etab \leq (b+1)\etab}$. 
It can be verified that this holds with the choices of ${\etab = [0.09,1,0.22,0.07]^\top}$ and $b = 4$, which satisfies the upper bound $\textnormal{diagL(-A)} = 15/2$ given by Theorem~\ref{thm:inf-coenex}.
Therefore, by Theorem~\ref{thm:nonlin-bene} the iteration converges for ${\theta\in(0,\frac{1}{b+1})}$, i.e., ${\theta\in(0,0.2)}$.

The Krasnoselskij iteration can be equivalently re-written as the forward step method on the matrix $F=I-A$:
$$
\begin{aligned}
    \xb(k+1) &= (1-\theta)\xb(k) + \theta A\xb(k)\\
    & = \xb(k) - \theta(I- A)\xb(k) = \xb(k) - \theta F\xb(k).
\end{aligned}
$$
where
$$
F= \frac{1}{2}\begin{bmatrix*}[r]
    5&     0&   -1&    3\\
   -3&    17&    12&    1\\
   -2&     1&    7&    5\\
    2&     0&    1&    8
\end{bmatrix*}.
$$
According to~\cite[Theorem 26(iii)]{davydov2024non}, if the operator ${\Fs:\xb\rightarrow F\xb}$ is monotone as in Definition~\ref{def:mono} (with $c=0$), the iteration converges for $\theta\in\big(0,\frac{1}{\textnormal{diagL}(\Fs)}\big)$ where $\textnormal{diagL}(\Fs)=1+\textnormal{diagL}(-A)$.
We note that this range of feasible values of $\theta$ corresponds to the one obtained from the upper bound given in Theorem~\ref{thm:inf-coenex}.
The operator $\Fs$ is monotone if and only if there exists ${\etab=[\eta_1,\cdots,\eta_n]\in\rea^n_{+}}$ such that (cfr. Proposition~\ref{prop:monoc1})
$$
-F_{ii}\eta_i +\sum_{j\neq i} \abs{F_{ij}}\eta_j \leq 0.
$$ 
The above holds for the same ${\etab=[0.09,1,0.22,0.07]^\top}$ and, in turn, the iteration converges for $\theta\in\big(0,0.117\big)$.

To make the results easily verifiable, we did not consider the coefficient of enriched weak contractivity and that of strong monotonicity. A more precise characterization of $\As$ is that it is $(b,c)$-enriched weak contractivity for $b\approx 4.006$ and ${c\approx 0.0227}$, while a more precise characterization of $\Fs$ is that it is $c$-strongly monotone with the same $c\approx 0.0227$.

\end{exmp}


%

We conclude the section with an easy-to-verify bound on the step size for monotone dynamical systems by combining the results of Theorem~\ref{thm:monoKSPC} and Theorem~\ref{thm:nonlin-bene}.
\begin{cor}\label{cor:nonlin-bene-mono}
    Consider a monotone dynamical system with dynamics
    $\xb(k+1) = (1-\theta)\xb(k)+\theta \Ts(\xb(k))$ where $\Ts$ is a Lipschitz operator whose Jacobian matrix is Metzler and let ${L=\normalfont{\text{diagL}(-\Ts)}}$.
    Then, the following hold:
    \begin{itemize}
        \item If $\Ts$ is $\etab$-subhomogeneous and ${\fix(\Ts)\neq \emptyset}$, then the iteration in~\eqref{eq:mainiter} converges to some fixed point $\bar{\xb}\in \fix (\Ts)$ for ${\theta\in\big(0,\frac{1}{1+L}\big)}$.
        \item If $\Ts$ is strictly subhomogeneous, the iteration in~\eqref{eq:mainiter} converges to the unique fixed point $\fix(\Ts)=\{x^\star\}$ for ${\theta\in\big(0,\frac{1}{1+L}\big]}$ with convergence rate optimized at ${\theta=1/(1+L)}$.
    \end{itemize}
    %
\end{cor}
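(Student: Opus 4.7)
The proof is essentially a straightforward chaining of Theorem~\ref{thm:monoKSPC} with Theorem~\ref{thm:nonlin-bene}, so my plan is to set up the parameters carefully and then invoke each theorem in turn. The strategy is to fix $b = L = \text{diagL}(-\Ts)$ and use the Metzler hypothesis on $D\Ts$ to translate the (strict) $\etab$-subhomogeneity hypothesis into $(L,c)$-enriched weak contractivity w.r.t.\ $\norm{\cdot}_{\infty,[\etab]^{-1}}$, after which the convergence statements follow by directly applying the Krasnoselskij convergence theorem with this value of $b$.

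More concretely, the first step is to observe that since $D\Ts(\xb)$ is Metzler almost everywhere and the choice $b = L$ satisfies $b \geq \text{diagL}(-\Ts)$ with equality, the equivalence in Theorem~\ref{thm:monoKSPC} applies. For the first bullet, $\etab$-subhomogeneity is precisely $c$-strict $\etab$-subhomogeneity with $c=0$, and hence $\Ts$ is $(L,0)$-enriched weakly contractive w.r.t.\ $\norm{\cdot}_{\infty,[\etab]^{-1}}$. Combined with the assumption $\fix(\Ts)\neq\emptyset$, Theorem~\ref{thm:nonlin-bene}(a) with $b = L$ then yields convergence of the iteration to some fixed point for $\theta\in(0,1/(1+L))$, which is exactly the claim.

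For the second bullet, strict $\etab$-subhomogeneity means there exists $c > 0$ such that $\Ts$ is $c$-strictly $\etab$-subhomogeneous; the equivalence in Theorem~\ref{thm:monoKSPC} then gives $(L,c)$-enriched weak contractivity. Applying Theorem~\ref{thm:nonlin-bene}(b) with $b = L$ yields convergence to the unique fixed point for $\theta\in(0,1/(1+L)]$ with the per-step contraction estimate
$$
\norm{\xb(k+1)-\xb^\star} \leq (1 - \theta c)\norm{\xb(k)-\xb^\star}.
$$
Since $c > 0$ is fixed, the factor $1-\theta c$ is strictly decreasing in $\theta$, so the fastest contraction within the admissible interval is attained at the right endpoint $\theta = 1/(1+L)$, which is the optimal step size claimed.

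There is no real obstacle here beyond bookkeeping: the only substantive check is that $b=L$ is the \emph{smallest} value for which Theorem~\ref{thm:monoKSPC} furnishes the equivalence, and using this smallest value is what produces the sharpest step-size bound $1/(1+L)$ in Theorem~\ref{thm:nonlin-bene}. Any larger $b' \geq L$ would still validate the equivalence but would shrink the admissible step-size interval to $(0,1/(1+b')) \subseteq (0,1/(1+L))$, so the choice $b = L$ is the right one and the stated bounds are tight within this derivation.
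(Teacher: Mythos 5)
Your proof is correct and follows exactly the route the paper intends: the corollary is presented there as a direct combination of Theorem~\ref{thm:monoKSPC} (with $b=L=\text{diagL}(-\Ts)$, using the Metzler hypothesis to convert (strict) $\etab$-subhomogeneity into $(L,c)$-enriched weak contractivity) and Theorem~\ref{thm:nonlin-bene} (to get the step-size range and the $1-\theta c$ rate, minimized at $\theta=1/(1+L)$). Your closing remark that choosing the smallest admissible $b$ yields the largest step-size interval is a correct and useful observation consistent with the bound $b^\star\leq\max\{0,\text{diagL}(-\Ts)\}$ from Theorem~\ref{thm:inf-coenex}.
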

%




\section{Application to Zero-Finding Algorithms\\ for Monotone Operators}\label{sec:app1}
The problem of finding a zero of an operator (e.g., the zero of the Laplacian operator in consensus problems \cite{bondy2008graph,Deplano20,bronski2014spectral,devries2018kernel,qu2025controllability}) is usually translated into the problem of finding a fixed point of a suitably defined operator. According to Theorem~\ref{thm:mono-ene}, the problem of finding a zero of a (strongly) monotone operator $\Fs$ is equivalent to the problem of finding a fixed point of the enriched weakly contractive operator $\Ts=\Is-\Fs$, namely:
$$
\Fs(\xb) = \bzero \quad \Leftrightarrow \quad \Ts(\xb) = \xb.
$$
The standard \virg{forward step method} to find a zero of the monotone operator $\Fs$ consists in iterating the \virg{forward operator}
\begin{equation}\label{eq:fsm}
    \Ss_{\theta\Fs} := \Is-\theta \Fs,
\end{equation}
\begin{thm}\label{thm:fsm}
    Let $\Fs:\rea^n\rightarrow\rea^n$ be a Lipschitz operator. If $\zer(\Fs)\neq \emptyset $ and $\Fs$ is strongly monotone w.r.t. $\norm{\cdot}_{\infty,[\etab]^{-1}}$, then the iteration ruled by $\xb(k+1)=\Ss_{\theta\Fs}(\xb(k))$ with $\Ss_{\theta\Fs}$ as in~\eqref{eq:fsm} converges to an element of $\zer(\Fs)$ for every 
    $$
    \theta \in\left(0,\frac{1}{b^\star+1}\right),
    $$
    where $b^\star\leq \text{\normalfont{diagL}}(\Fs)-1$ is given by
    $$
    b^\star = \min\{b\geq 0 \abs{(b+1) I - D\Fs(\xb)}\etab \leq (b+1)\etab,\etab>0\}.
    $$
\end{thm}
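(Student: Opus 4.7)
The plan is to recognize the forward step iteration as a Krasnoselskij iteration applied to the residual operator $\Ts := \Is - \Fs$, and then invoke Theorem~\ref{thm:nonlin-bene} together with the link between strong monotonicity and enriched weak contractivity established in Theorem~\ref{thm:mono-ene}. The very first step is the identity
$$
\xb(k+1) = \xb(k) - \theta\,\Fs(\xb(k)) = (1-\theta)\xb(k) + \theta\,\Ts(\xb(k)),
$$
which shows that $\Ss_{\theta\Fs}$ is precisely the Krasnoselskij update~\eqref{eq:mainiter} driven by $\Ts$, with $\fix(\Ts) = \zer(\Fs) \neq \emptyset$ by hypothesis.

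Next I would reinterpret the definition of $b^\star$. Substituting $D\Ts(\xb) = I - D\Fs(\xb)$ yields the identity $\abs{bI + D\Ts(\xb)}\etab = \abs{(b+1)I - D\Fs(\xb)}\etab$, so the condition defining $b^\star$ in the statement is exactly condition (ii) of Theorem~\ref{thm:inf-coenex} specialized to $c=0$, i.e. the characterization of $\Ts$ being $(b,0)$-enriched weakly contractive w.r.t. $\norm{\cdot}_{\infty,[\etab]^{-1}}$. The hypothesis that $\Fs$ is strongly monotone now enters through Theorem~\ref{thm:mono-ene}(b): for every $b \geq \text{diagL}(-\Ts) = \text{diagL}(\Fs)-1$ the operator $\Ts$ is $(b,c)$-enriched weakly contractive for some $c > 0$, and relaxing $c$ to $0$ only loosens the inequality, so any such $b$ belongs to the set defining $b^\star$. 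This simultaneously shows that the set is nonempty and yields the upper bound $b^\star \leq \text{diagL}(\Fs)-1$, matching the generic estimate in Theorem~\ref{thm:inf-coenex}.

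Finally, at the minimizing value $b = b^\star$ the operator $\Ts$ is $(b^\star,0)$-enriched weakly contractive and $\fix(\Ts) \neq \emptyset$, so part (a) of Theorem~\ref{thm:nonlin-bene} delivers convergence of the iteration to some element of $\fix(\Ts) = \zer(\Fs)$ for every $\theta \in (0, 1/(b^\star+1))$, concluding the proof. I do not expect a real technical obstacle: the argument is essentially a composition of Theorems~\ref{thm:inf-coenex}, \ref{thm:mono-ene}, and \ref{thm:nonlin-bene}(a), with the only bookkeeping being the parameter conversion $D\Ts = I - D\Fs$. The conceptual point worth highlighting is that strong monotonicity of $\Fs$ is used solely to certify that $b^\star$ is finite (and bounded by $\text{diagL}(\Fs)-1$), so that the step-size range is nonempty; the convergence itself relies only on the weaker $(b^\star,0)$-enriched weak contractivity of $\Ts$, which is what one would expect since Theorem~\ref{thm:nonlin-bene}(a) does not demand a strict contraction factor.
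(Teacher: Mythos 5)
Your proposal is correct and follows essentially the same route as the paper's proof: rewrite the forward step as the Krasnoselskij iteration on $\Ts=\Is-\Fs$, use Theorem~\ref{thm:mono-ene} together with Theorem~\ref{thm:inf-coenex} to identify $b^\star$ as the minimal parameter for $(b,0)$-enriched weak contractivity with the bound $b^\star\leq\text{diagL}(\Fs)-1$, and conclude via Theorem~\ref{thm:nonlin-bene}(a). Your explicit remark that relaxing $c$ to $0$ only loosens the inequality (so strong monotonicity merely certifies nonemptiness of the admissible set of $b$) is a slightly more careful articulation of the paper's terser statement that \virg{the smallest admissible value $b^\star$ is obtained by selecting $c=0$}.
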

\begin{proof}
   The forward step method corresponds to the Krasnoselskij iteration with operator $\Ts:=\Is-\Fs$:
    $$
    \Ss_{\theta\Fs} = \Is-\theta \Fs= (1-\theta)\Is+\theta (\Is-\Fs) = (1-\theta)\Is+\theta \Ts. 
    $$
    By Theorem~\ref{thm:mono-ene}, $\Fs$ is $c$-strongly monotone if and only if $\Ts$ is $(b,c)$-enriched weakly contractive for some $b\geq 0$ which, in turn, is equivalent to $\abs{b I + D\Ts(\xb)}\etab \leq (b-c+1)\etab$ for almost every $\xb\in\rea^n$ by Theorem~\ref{thm:inf-coenex}. The smallest admissible value $b^\star$ of $b$ is obtained by selecting $c=0$. Theorem~\ref{thm:nonlin-bene} ensures that the iteration converges to an element of $\fix(\Ts)$ for any $0<\theta < 1/(b+1)\leq 1/(b^\star+1)$. The proof is completed by exploiting Theorem~\ref{thm:inf-coenex} which yields $b^\star\leq  \text{diagL}(-\Ts) = \text{diagL}(\Fs-\Is) = \text{diagL}(\Fs) -1$.
\end{proof}

We now provide some numerical simulations to illustrate that Theorem~\ref{thm:fsm} represents a generalization of~\cite[Theorem 26]{davydov2024non} since it allows for larger step sizes and, in turn, improved convergence rate.

\subsection{Affine operators}\label{sec:aff}
We first consider the simple case where $\Ts:=\As$ is an affine operator ${\As:\xb\mapsto A\xb+\bb}$ and fixed-point problems of the form
\begin{equation*}
\xb(k+1) = \Ts(\xb(k)) := A\xb(k) + \bb.
\end{equation*}
Consider now $A$ and $\bb$ as follows:
$$
A = \left[\begin{array}{rrrr}
    -1.07& -0.17& -0.53& -0.33\\
    0.07&  0.42& -0.07&  0.15\\
    -0.13& -0.10&  -0.06& -0.30\\
    0.04&  0.05& -0.21&  0.40 
\end{array}\right],\quad \bb=\left[\begin{array}{r}
    1\\1\\1\\1
\end{array}\right].
$$
In this case, there is a unique fixed point given by
$$
\xb^\star\approx\begin{bmatrix}
    0.04& -2.14& -0.25& -1.76
\end{bmatrix}^\top.
$$
It is possible to exploit either enriched weak contractivity or strong monotonicity to determine the range of possible values for the Krasnoselskij iteration w.r.t. $\norm{\cdot}_{\infty}$:
\begin{itemize}
    \item \textit{Enriched weak contractivity}: Theorem~\ref{thm:inf-coenex} ensures convergence of the Krasnoselskij iteration for ${0\leq\theta \leq \frac{1}{b^\star+1}\leq 0.645}$, because the lowest value of $b$ such that $\As$ is $(b,0)$-enriched weakly contractive is $b\geq b^\star= 0.55$;
    \item \textit{Monotonicity}: Theorem 26(iii) in \cite{davydov2024non} ensures convergence of the Krasnoselskij iteration for ${0\leq\theta \leq \frac{1}{\text{diagL}(\Is-\As)}= 0.48}$, because $\Is-\As$ is monotone.
\end{itemize}
The above derivations show that exploiting enriched weak contractivity -- instead of monotonicity -- allows to enlarge the range of admissible step sizes ensuring the convergence of the Krasnoselskij iteration, which is coherent with the relationship $b^\star\leq\max\{0,\text{diagL}(-\As)\} =1.07$ provided in Theorem~\ref{thm:inf-coenex}.

We now discuss how to chose $\theta$ to guarantee the best (smallest) convergence rate. 
According to Theorem~\ref{thm:nonlin-bene}, the convergence rate depends on both parameters $b$ and $c$ determining $(b,c)$-enriched weakly contractive of the operator $\As$ and, in particular, it is upper bounded by
\begin{equation}\label{eq:ub}
    1-\frac{c}{b+1}.
\end{equation}
In general, there may be different pairs of $b,c$ for which $\As$ is enriched weakly contractive. Therefore, we solve a minimization problem whose objective function is the bound in~\eqref{eq:ub} to find the best possible pair $(b,c)$ such that enriched weak contractivity is guaranteed:
$$
\begin{aligned}
(b^\star,c^\star)=\argmin_{b\geq0, c\geq 0} \:\: & 
1-\frac{c}{b+1},\\
\text{s.t.} \quad &\: \abs{b I + A}\bone \leq (b-c+1)\bone,\\
& \: c\leq b+1.
\end{aligned}
$$
In our example, the solution to the above problem is ${b^\star = 0.695}$, $c^\star=0.29$, ensuring that the convergence rate is not greater than $0.83$ when $\theta=1/(b^\star+1)=0.59$. Instead, according to~\cite{davydov2024non}, selecting $\theta = \frac{1}{\text{diagL}(\Is-\As)}=0.48$ ensures a worse factor, namely $0.86$. Fig.~\ref{fig:impcrate}(left) shows the convergence of the distance between the iteration $\xb(k)$ and the fixed point $\xb^\star$ when the forward step method in \eqref{eq:fsm} $0.48$ is applied with the two different choices of $\theta$, corroborating the theoretical results by revealing an improved convergence rate.

\begin{figure}[!t]
    \centering
    \includegraphics[height=13em]{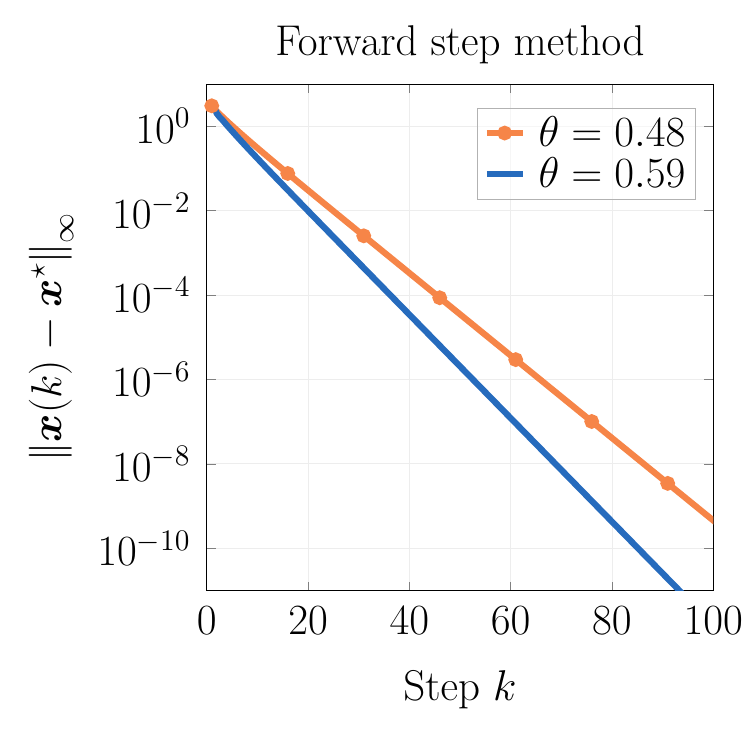}
    \includegraphics[height=13em]{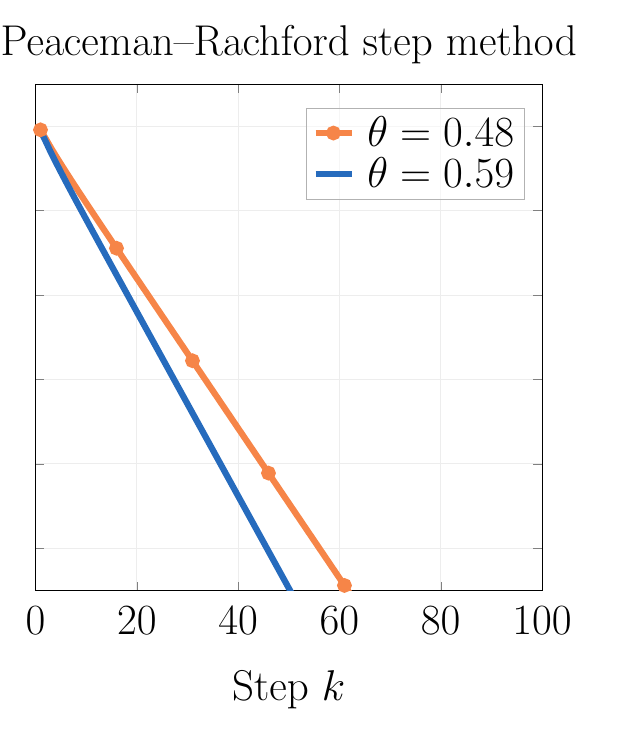}
    \caption{Residual versus number of iterations for the (left) forward-step method and (right) Peaceman-Rachford Splitting.}
    \label{fig:impcrate}
\end{figure}

We also compare the performance of the forward-backward and the Peaceman--Rachford step methods with the same choices of the step size, as their convergence can be guaranteed by retracing the results of~\cite{davydov2024non} in light of Theorem~\ref{thm:fsm}. The results for the forward-backward step method are comparable with those of the forward step method and thus are omitted, while the results for the Peaceman-Rachford step method are shown in Fig.~\ref{fig:impcrate}(right).


\subsection{Composition of nonlinear diagonal operators with affine operators}\label{sec:dnl}
In this section, we consider the special class of \virg{\textit{nonlinear diagonal}} operators ${\Phi:\xb\mapsto[\phi_1(x_1),\cdots,\phi_n(x_n)]^\top}$ where $\phi_i:\rea\rightarrow\rea$ are assumed to be Lipschitz.
%
%
We consider operators $\Ts:=\Phi\circ\As$ resulting from the composition of a nonlinear diagonal operator with an affine operator.
This class of functions arises, for instance, in the context of training infinite-depth weight-tied neural network \cite{winston2020monotone,li2021training,Jafarpour21} by solving fixed-point problems of the form
\begin{equation}\label{eq:fpp}
\xb(k+1) = \Ts(\xb(k)) := \Phi(A\xb(k) + \bb),
\end{equation}
where ${\Phi:\rea^n\rightarrow \rea^n}$ plays the role of a nonlinear activation function while $A$, $\bb$ play the role of weight matrix and bias terms, ruling the continuous-time dynamics
$
{\dot{\xb}(t) = -\xb(t)+\Phi(A\xb+\bb)}.
$
We remark that, in the context of neural network, the vector $\bb$ is usually the result of two separate terms, namely the term related to the inputs $U\ub$ where $U\in\mathbb{R}^{n\times m}$ are the input-injection weights and $\ub\in\rea^m$ are the inputs, and the constant term related to the biases $\pb\in\rea^n$, yielding $\bb=U\ub+\pb$. 

Let us make the following technical assumption.
\begin{assum}\label{ass:actfun}
 The nonlinear diagonal operator ${\Phi(\xb)=[\phi_1(x_1),\cdots,\phi_n(x_n)]^\top}$ satisfies, for some $d_1\leq d_2$, the following condition
 $$
\frac{\phi_i(x)-\phi_i(x)}{x-y}\in[d_1,d_2],\qquad \forall x,y\in\rea,x\neq y.
 $$
\end{assum}

The most standard activation functions used in machine learning satisfy these bounds. For instance, the Leaky ReLu function
$$
\LReLU(x,\alpha) = \max\{\alpha x, x\},\quad \text{with}
\quad \alpha\in[0,1]
$$
satisfies Assumption~\ref{ass:actfun} with $d_1 = \alpha$ and $d_2=1$. This function has been introduced by Kaiming He \textit{et al.} in \cite{he2015delving} as an attempt to fix the \virg{dying ReLU} problem, i.e., the occurrence of dead neurons when using the ReLu function, which corresponds to the Leaky ReLu with $\alpha=0$. 

To test the performance resulting from the exploitation of enriched weak contractivity versus strong monotonicity, we randomly generate a matrix $M\in\rea^{n\times n}$ and a vector ${\bb\in\rea^n}$ following a Gaussian distribution with mean $\mu=0$ and variance $\sigma^2=1/n$. 
To ensure that $\Ts$ is enriched weakly contractive with respect to the diagonally weighted norm $\norm{\cdot}_{\infty,[\etab]^{-1}}$, we first derive a sufficient condition:
$$
\begin{aligned}
    \abs{b I + D\Ts(\xb)}\etab & = \abs{b I + D \Phi(A\xb(k) + \bb)}\etab \\
    & \leq \max\left\{\abs{b I + d_1A}\etab,\abs{b I + d_2A}\etab\right\} \\
    & \leq (b-c+1)\etab,
\end{aligned}
$$
Then, we compute $A$ by projecting $M$ such that the above condition is satisfied and the diagonal elements of $A$ are free:
$$
\begin{aligned}
\min_{b\geq c-1,A\in\rea^{n\times n}} \:\: & 
\norm{(A-M) - \diag(A-M)}_{F},\\
\text{s.t.} \qquad &\: \abs{b I + d_1A}\etab  \leq (b-c+1)\etab\\
& \: \abs{b I + d_2A}\etab  \leq (b-c+1)\etab
\end{aligned}
$$
for a given, fixed coefficient $c\geq 0$ and a vector $\etab\in\rea^{n}_{+}$, where $\norm{\cdot}_{F}$ denotes the Frobenius norm and $\diag(\cdot)$ returns a diagonal matrix whose elements are those in the diagonal of the input matrix.
%
%
For instance, one matrix of dimension $n=5$ is
$$
A = \frac{1}{2}\begin{bmatrix}
    0.278&     0.111&    -0.280&     -0.134 & -0.098\\
    -0.189&    -0.739&    -0.207&     -0.105 & -0.066\\
    -0.408 & -0.355 & -0.203 & 0.301 & 0.039\\
    0.252 & 0.246 & -0.225 & -0.537 & -0.046\\
    0.144 & 0.253 & 0.288 & 0.225 & -0.395\\
\end{bmatrix},
$$
for which the operator $\Ts$ is $(b,c)$-enriched weakly contractive with $b=0.537$, $c=0.324$ and for ${\etab = [2.673, \: 1.181, \: 2.215, \: 1.261, \: 1.498]^\top}$.

In the experiments, we selected different choices for the parameter ${c=\{0.2,0.6,1,1.25,1.5,1.75,2\}}$ and we considered the Leaky ReLu activation function with $\alpha=0.1$. Also, we considered matrices $A\in\rea^{n\times n}$ with growing size ${n\in[5,200]}$ and, for each size $n$ and each parameter $c$, we run $10$ different experiments, computing the optimal step size $\theta$ both exploiting our Theorem~\ref{thm:nonlin-bene} and Theorem 26 in~\cite{davydov2024non}.
In particular, we denote by $\theta^{\star}_{\textsc{ewc}}$ the optimal step size obtained by minimizing the convergence rate $\rho^\star_{\textsc{ewc}}$ provided by Theorem~\ref{thm:nonlin-bene}, which exploits enriched weak contractivity, for any possible weighted norm $\norm{\cdot}_{\infty,[\etab]^{-1}}$:
$$
\rho^\star_{\textsc{ewc}} =1- c^\star_{\textsc{ewc}}\theta^\star_{\textsc{ewc}},\qquad \theta^{\star}_{\textsc{ewc}} = \frac{1}{b^\star_{\textsc{ewc}} +1}, 
$$
where $b^\star_{\textsc{ewc}},c^\star_{\textsc{ewc}},,\etab^\star_{\textsc{ewc}}$ are solutions of
$$
\begin{aligned}
\argmin_{b\geq0, c\geq 0,\etab>\bzero} \:\: & 
1-\frac{c}{b+1},\\
\text{s.t.} \qquad \quad &\:\abs{b I + d_1A}\etab  \leq (b-c+1)\etab\\
& \: \abs{b I + d_2A}\etab  \leq (b-c+1)\etab\\
& c\leq b+1.
\end{aligned}
$$

On the other hand, we denote by $\theta^{\star}_{\textsc{mon}}$ the optimal step size obtained by minimizing the convergence rate $\rho^\star_{\textsc{mon}}$ provided by~\cite[Theorem 26(ii)]{davydov2024non}, which exploits monotonicity, for any possible weighted norm $\norm{\cdot}_{\infty,[\etab]^{-1}}$, namely

$$
\rho^\star_{\textsc{mon}} =1- c^\star_{\textsc{mon}}\theta^\star_{\textsc{mon}},\qquad \theta^{\star}_{\textsc{mon}} = \frac{1}{1-\displaystyle \min_{i\in\{1,\cdots n\}}\{\alpha\cdot a_{ii},a_{ii}\}}, 
$$
where $c^\star_{\textsc{mon}},\etab^\star_{\textsc{mon}}$ are the solutions of (cfr.~\cite[Theorem 21(ii)]{davydov2024NN})
$$
\begin{aligned}
\argmin_{c\geq 0,\etab>\bzero} \:\: & 
1-c\theta^\star_{\textsc{mon}},\\
\text{s.t.} \qquad &\: \mzr{d_1A}\etab \leq (1-c) \etab\\
&\: \mzr{d_2A}\etab \leq (1-c) \etab\\
& c \leq  1/\theta^{\star}_{\textsc{mon}}
\end{aligned}
$$

\begin{figure}[!b]
    \centering
    \includegraphics[width=0.48\textwidth]{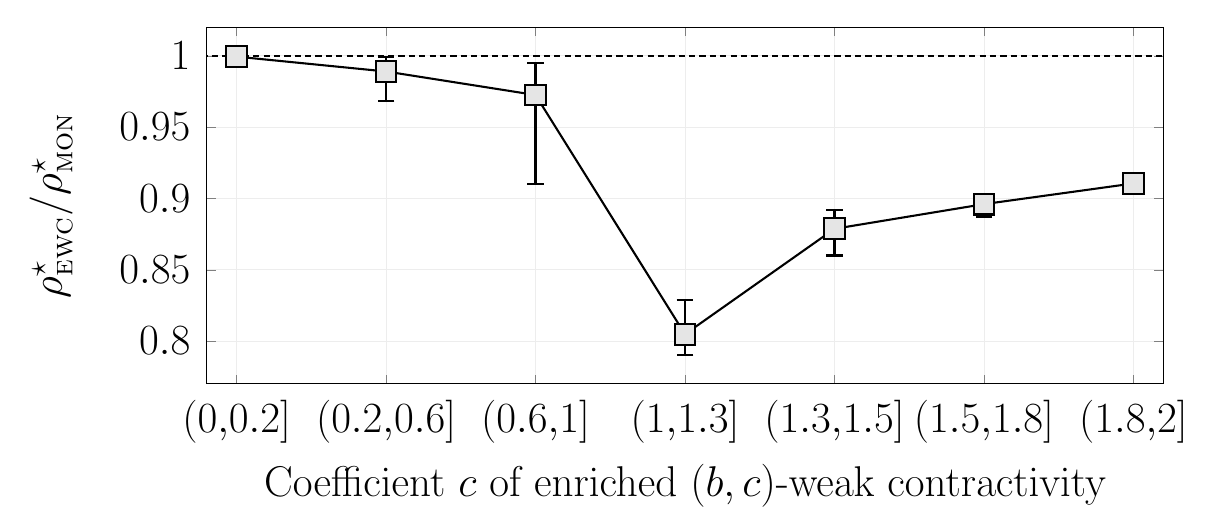}
    \caption{Display of the ratio $\rho^{\star}_{\textsc{ewc}}/\rho^{\star}_{\textsc{mon}}$ between the convergence rates averaged over $100$ instances for different number of features $n\in[5,200]$: values lower than $1$ denote an improvement in the convergence rate obtained by exploiting $(b,c)\text{-enriched}$ weak contractivity instead of $c$-strong monotonicity.}
    \label{fig:ratio}
\end{figure}

The results of the experiments are given in Fig~\ref{fig:ratio}, where we report the ratio between the two convergence rates $\rho^{\star}_{\textsc{mon}}$ and $\rho^{\star}_{\textsc{ewc}}$. Ratio values less than $1$ denote an improvement in the convergence rate bound thanks to the exploitation of enriched weak contractivity.
For small values of $c\approx0$, we notice that the ratio is very close to $1$ because the parameter $b$ does not influence significantly the convergence rate bound as it is close to $1$. 
Instead, for larger values of $c\approx 1$ we observe that the highest improvement  in the convergence rate bound is reached.
Finally, for large values of $c\gg 1$ this improvement is incrementally lost because in the limit of $c\rightarrow\infty$ it holds that $b\rightarrow\infty$, which implies that the convergence rate bound becomes again very close to $1$.

%





\section{Application to Nonlinear Consensus\\ in Monotone Multi-Agent Systems}\label{sec:app2}
In this section, we consider multi-agent systems (MASs) composed of $n\in\nat$ agents modeled as dynamical systems with scalar state $x_i\in\mathbb{R}$, seeking consensus via discrete-time, distributed information exchange:
\begin{equation}\label{eq:agent}
x_i(k+1) = x_i(k)-\theta \sum_{i=1}^na_{ij}f_{ij}\Big(x_i(k)-x_j(k)\Big),
\end{equation}
where $f_{ij}:\mathbb{R}\rightarrow \mathbb{R}$ are Lipschitz nonlinear functions. We limit our attention to monotone discrete-time MASs in the sense of Definition~\ref{def:monosys}, i.e., systems whose dynamics is ruled by an order-preserving operator. For the dynamics in~\eqref{eq:agent}, order-preservation is guaranteed by $\partial f_{ij}/\partial x\geq 0$ a.e. according to Proposition~\ref{prop:op}. Monotone MASs are also called \textit{cooperative} especially in the continuous-time framework~\cite{Leenheer01,Hirsch03,Smith08,Bokharaie10,manfredi2017necessary}.

Coefficients $a_{ij}\in\{0,1\}$ are such that $a_{ij}=1$ if node $j$ can transmit information to node $i$, $a_{ij}=0$ if not, and $a_{ii}=0$. If $a_{ij}=1$, then agent $j$ is said to be a \emph{neighbor} of agent $i$, and we denote its set of neighbors by $\Nc_i=\left\{j\in \mathcal{V}: (i,j)\in \mathcal{E}\right\}$. The pattern of interconnections among the agents is given by a graph $\Gc=(\mathcal{V},\mathcal{E})$ where $\mathcal{V}=\{1,\ldots,n\}$ is the set of nodes representing the agents and $\mathcal{E}\subseteq \mathcal{V} \times \mathcal{V}$ is a set of directed edges. A directed edge $(i,j)\in \mathcal{E}$ exists if agent $i$ is influenced~by agent $j$, i.e., $a_{ij}=1$. The matrix $A=\{a_{ij}\}\in\mathbb{R}^{n\times n}$ is called the adjacency matrix.
%
%
By defining 
$$
\Ls(\xb):=\diag(\xb)A-A\diag(\xb),\qquad g(A):=A\mapsto A\bone,
$$
we can rewrite the dynamics of the agents' network as follows
\begin{equation}\label{eq:mas}
\begin{aligned}
    \xb(k+1) &= \xb(k)-\theta (g\circ f\circ \Ls)(\xb(k)) \\
    & = (1-\theta)\xb(k)+\theta(\Is-g\circ f\circ \Ls)(\xb(k))
\end{aligned}
\end{equation}
where $f=[\cdots,f_{ij},\cdots]:\rea^{n\times n}\rightarrow\rea^{n \times n}$ and where $\circ$ denotes the composition operator.

The main result of this section is Theorem~\ref{th:consensus_DT}, which provides sufficient conditions on the local interaction rules $f_{ij}$ such that the operator ${\Ts:=\Is-g\circ f\circ\Ls}$ in \eqref{eq:agent} is enriched weakly contracting. Due to Theorem~\ref{thm:nonlin-bene}, enriched weak contractivity implies stability of the system and convergence of its state trajectories toward an equilibrium point, while accounting for heterogeneous local interaction rules.
This result is particularly interesting from a control perspective when addressing the problem of steering a MAS toward specific equilibrium points, by relying on partial and relative information, without the intervention of a central controller, as in formation control~\cite{oh2015survey} or distributed optimization~\cite{yang2019survey}. 
Moreover, we provide some extra sufficient conditions ensuring that the equilibrium point set coincides with the so-called consensus set ${\mathcal{C}=\{\alpha\bone: \alpha\in \rea \}}$.
The proposed sufficient condition is graph theoretical and based on the graph $\Gc$ describing the pattern of interconnections among the agents: It requires that there exists a globally reachable node in $\Gc$ and that consensus states are equilibrium points.

\begin{thm}\label{th:consensus_DT}
~
Consider a discrete-time MAS with agents dynamics as in~\eqref{eq:agent}. If the local interaction rules ${f_{ij}:\Xc\rightarrow\rea}$, with ${i= 1,\ldots,n}$, are Lipschitz with constant $L\geq 0$ and satisfy:
\begin{enumerate}[label=$(\roman*)$]
\item $\partial f_{ij}/\partial x\geq 0$ for almost every $x\in\rea$;
\end{enumerate}
then the state trajectories globally, asymptotically converge  to one of its equilibrium points, if any, for all
\begin{equation}\label{eq:ss_mas}
    \theta\in\left(0,\frac{1}{\displaystyle L\max_{i=1,\cdots,n}\abs{\Nc_i}}\right).
\end{equation}
If it further holds that:
\begin{enumerate}[label=$(\roman*)$]
\setcounter{enumi}{1}
\item $f_{ij}(0)=0$ and $f_{ij}(x)\neq 0$ a.e. in a neighborhood of $0$;
\item the graph $\Gc$ has a globally reachable node;
\end{enumerate}
then the MAS converges asymptotically to a consensus state.
\end{thm}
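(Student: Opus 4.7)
The plan is to recognize the MAS update in~\eqref{eq:agent}--\eqref{eq:mas} as a Krasnoselskij iteration with operator $\Ts=\Is-g\circ f\circ\Ls$, so that the whole theorem reduces to applying Theorem~\ref{thm:nonlin-bene} after verifying that $\Ts$ is enriched weakly contractive with respect to the unweighted supremum norm (weight $\etab=\bone$). Everything else is then either a computation on the Jacobian $D\Ts$ (for the step-size bound) or a graph-theoretic argument at fixed points (for the consensus claim).

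For the first part, I would start by computing $D\Ts(\xb)$: from $T_i(\xb)=x_i-\sum_j a_{ij}f_{ij}(x_i-x_j)$ one reads off diagonal entries $1-\sum_j a_{ij}f'_{ij}(x_i-x_j)$ and off-diagonal entries $(D\Ts(\xb))_{ik}=a_{ik}f'_{ik}(x_i-x_k)$ for $k\neq i$. Hypothesis $(i)$ combined with the $L$-Lipschitz assumption gives $f'_{ij}\in[0,L]$ almost everywhere. Choosing $b$ so that $b+1\geq L\max_i|\Nc_i|$ makes each diagonal entry of $bI+D\Ts(\xb)$ nonnegative, so the row sums $\abs{bI+D\Ts(\xb)}\bone$ telescope (using $a_{ii}=0$) to exactly $(b+1)\bone$. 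By Theorem~\ref{thm:inf-coenex}, $\Ts$ is therefore $(b,0)$-enriched weakly contractive w.r.t.\ $\norm{\cdot}_\infty$ for $b=L\max_i|\Nc_i|-1$, and Theorem~\ref{thm:nonlin-bene}$(a)$ yields convergence to some fixed point (whenever one exists) for every $\theta\in(0,1/(b+1))$, which is exactly the range~\eqref{eq:ss_mas}.

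For the consensus claim, I would first observe that $(ii)$ combined with the monotonicity in $(i)$ forces the zero set $\{x:f_{ij}(x)=0\}$, which must be an interval containing $0$, to collapse to $\{0\}$; hence $f_{ij}(x)$ has strictly the same sign as $x$. This immediately shows that every consensus state $\alpha\bone$ is a fixed point of $\Ts$, so $\fix(\Ts)\neq\emptyset$ and the first part guarantees convergence to some $\xb^\star\in\fix(\Ts)$. It then remains to prove that every fixed point lies in $\mathcal{C}$. At such a $\xb^\star$, $\sum_j a_{ij}f_{ij}(x_i^\star-x_j^\star)=0$ for each $i$; picking $i^\star$ that attains $M=\max_i x_i^\star$, every summand is nonnegative by the sign property, so each must vanish and $x_j^\star=M$ for every in-neighbor $j$ of $i^\star$. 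Iterating, the set $S^\star=\{i:x_i^\star=M\}$ is closed under following outgoing edges of $\Gc$; by $(iii)$, the globally reachable node $v$ is reachable from $i^\star$ in $\Gc$, hence $v\in S^\star$. The symmetric argument applied to $m=\min_i x_i^\star$ gives $x_v^\star=m$, so $M=m$ and $\xb^\star=M\bone\in\mathcal{C}$.

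The steps I expect to be most delicate are $(1)$ matching the direction of the edges of $\Gc$ to the propagation of the extremum at fixed points -- the equilibrium equation at $i^\star$ involves in-neighbors, so one has to be careful that \virg{globally reachable} in the paper's convention really does connect $i^\star$ to $v$ in the right direction -- and $(2)$ deducing the strict sign property $f_{ij}(x)>0\Leftrightarrow x>0$ from the a.e.\ nonvanishing condition together with monotonicity, since this is what converts the sum identity $\sum_j a_{ij}f_{ij}(x_i^\star-x_j^\star)=0$ into term-by-term equalities at the maximum.
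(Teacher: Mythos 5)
Your proposal is correct, and the two halves relate differently to the paper's proof. For the step-size bound, you take essentially the same route with a minor variation: the paper establishes $(b,0)$-enriched weak contractivity of $\Ts$ with $b=L\max_i\abs{\Nc_i}-1$ by invoking $\bone$-subhomogeneity and the Metzler structure of $D\Ts$ through Theorem~\ref{thm:monoKSPC} and Corollary~\ref{cor:nonlin-bene-mono}, whereas you verify the row-sum condition $\abs{bI+D\Ts(\xb)}\bone\leq(b+1)\bone$ of Theorem~\ref{thm:inf-coenex} directly; the computations are the same and both land on~\eqref{eq:ss_mas}. For the consensus claim, however, your argument is genuinely different and more elementary. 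The paper works spectrally: it shows the one-sided Jacobians at consensus points are row-stochastic with a simple unit eigenvalue (via aperiodicity, the globally reachable node, and Perron--Frobenius), uses nonexpansiveness to get convexity of $\fix(\Ts)$, and derives a contradiction by exhibiting a non-consensus equilibrium as a second eigenvector of $\lambda=1$. You instead extract the strict sign property $f_{ij}(x)x>0$ for $x\neq 0$ from $(i)$--$(ii)$ (correctly: monotonicity makes the zero set of $f_{ij}$ an interval around $0$, which the a.e.\ nonvanishing condition collapses to $\{0\}$) and run a max/min propagation on the fixed-point equation, using the globally reachable node to force $\max_i x_i^\star=\min_i x_i^\star$. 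This buys you a proof that avoids directional derivatives, the Perron--Frobenius machinery, and the convexity of the fixed-point set altogether, at the cost of being specific to the consensus structure rather than illustrating the general eigenstructure viewpoint the paper is promoting. The one point you rightly flag --- the orientation of edges versus reachability --- checks out under the paper's convention: $a_{ij}=1$ means $j$ influences $i$, your max set propagates from $i^\star$ along edges $(i^\star,j)$ to in-neighbors, and a globally reachable node is reached from every node along such paths, so both the max and the min propagate to it.
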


\begin{proof}
The state update of the MAS can be written in the form of the Krasnoselskij iteration~\eqref{eq:mas} on the operator ${\Ts:=\Is-f\circ\Ls}$, which is $(b,0)$-enriched weakly contracting for
$$
b\geq \text{diagL}(-\Ts)=\text{diagL}(g\circ f\circ\Ls)-1 = L\max_{i=1,\cdots,n}\abs{\Nc_i} -1
$$
w.r.t. $\norminf{\cdot}$ due to Theorem~\ref{thm:monoKSPC}, as it is is $\bone$-subhomogeneous (i.e., ${\Ts(\xb+\theta\bone) = \Ts(\xb) + \theta\bone}$) and its Jacobian matrix is Metzler by construction and by condition $(i)$.
When the system has at least one equilibrium point, one can thus exploit the result in Corollary~\ref{cor:nonlin-bene-mono} to establish that, for any initial condition and for all $\theta$ as in~\eqref{eq:ss_mas}, the state trajectories of the MAS converge to one of its equilibrium points, completing the first part of the proof.

While condition $(ii)$ implies that all consensus states are equilibrium points, condition $(iii)$ implies that there are no other equilibrium points other than the consensus states.
The graph $\Gc$ is aperiodic for any $\theta$ as in~\eqref{eq:ss_mas}, which ensures that the diagonal elements of the Jacobian matrix are strictly positive and, in turn, the presence of a self-loop at each node; moreover, the graph contains a globally reachable node due to condition $(iii)$.

By means of the definition of directional derivative, for every equilibrium point $\xb_e$, it holds
\begin{equation}\label{eq:dirder}
\begin{aligned}
D\Ts^{\pm}(c\bone)\xb_e &=
\lim_{h\rightarrow 0^{\pm}} \frac{\Ts(c\bone+h \xb_e)-\Ts(c\bone)}{h}\\
&=\lim_{h\rightarrow 0^{\pm}} \frac{c\bone+h\vb-c\bone}{h}
=\lim_{h\rightarrow 0^{\pm}} \frac{h\xb_e}{h}=\xb_e.
\end{aligned}
\end{equation}
By replacing $\xb_e=\bone$,~\eqref{eq:dirder} implies that the left/right Jacobian matrices $D\Ts^\pm$ are row-stochastic at any consensus point $c\bone$ with $c\in\rea$. By exploiting~\cite[Theorem 5.1]{Bullo18}, we conclude that $D\Ts^{\pm}$ have a simple unitary eigenvalue $\lambda = 1$ with corresponding eigenvector equal to $\bone$, unique up to a scaling factor.

Since the Krasnoselskij iteration is nonexpansive by Lemma~\ref{lem:lipcond}, then the set of equilibrium points is either empty or closed and convex by~\cite[Theorem 1]{Ferreira96}.
Now, if there is an equilibrium point $x_e\not\in\mathcal{C}$ that is not a consensus point, then all points ${c\bone + h x_e}$ with $h\in[0,1]$ and $c\in\mathbb{R}$ are also equilibrium points by construction, and thus
\begin{align*}
D\Ts^{\pm}(c\bone)x_e &=
\lim_{h\rightarrow 0^{\pm}} \frac{\Ts(c\bone+h x_e)-\Ts(c\bone)}{h}\\
&=\lim_{h\rightarrow 0^{\pm}} \frac{c\bone+hx_e-c\bone}{h}
=\lim_{h\rightarrow 0^{\pm}} \frac{hx_e}{h}=x_e,
\end{align*}

This means that $x_e$ is a second eigenvector (other than $v=\bone$) of the unitary eigenvalue $\lambda=1$ of $D\Ts^{\pm}$, which is a contradiction with respect to~\cite[Theorem 5.1]{Bullo18}. This proves that there do not exist equilibrium points $x_e$ outside the consensus set, completing the second part of the proof.
\end{proof}

Theorem \ref{th:consensus_DT} generalizes \cite[Theorem 6]{Deplano23} for multi-agent systems of the form \eqref{eq:agent} by accommodating non-continuously differentiable functions. This framework enables analysis of heterogeneous, asymmetric interactions such as:
$$
f_{ij}(x) = \max(\alpha x,x),\qquad f_{ij}(x) =  \min(\alpha x,x),
$$
with $\alpha >0$ ruling the asymmetry of the interaction. 
Notably, neighboring agents may also exhibit distinct interaction rules ($f_{ij} \neq f_{ji}$).
We remark that the above result can be generalized to deal with the special case $\alpha=0$, which leads to unilateral interactions in the sense of~\cite{manfredi2017necessary}, by considering more general notions of graphs, e.g., bicolored interaction graphs~\cite[Definition 10]{manfredi2017necessary}.

\section{Conclusions and Future Work}
The introduction of \textit{enriched weak contractivity} in this paper marks a significant step forward in the theory of monotone operators and monotone dynamical systems.
Indeed, enriched weak contractivity emerges as a powerful tool for generalizing foundational results, specifically by: (i) enabling the convergence of fixed-point iterations for monotone operators with larger step sizes in the Krasnoselskij iteration, thereby yielding improved bounds on convergence rates; and (ii) facilitating agreement in monotone multi-agent networks by accommodating Lipschitz, nonlinear, heterogeneous, and asymmetric interactions among agents.

Future work will investigate the role of enriched weak contractivity in other non-Euclidean spaces, e.g., those equipped with diagonally weighted $\ell_1$ norms. In fact, the iteration of operators may converge to a fixed point even though they are not contracting with respect to the $\ell_{\infty}$ norm (see Example~\ref{exmp:counter}) or the $\ell_2$ norm~\cite[Example 9]{bullo2021contraction}, while it is still possible that they are contracting with respect to $\ell_1$ norms. An interesting direction is the emergence of consensus in nonlinear multi-agent systems in the case of antagonistic interactions~\cite{altafini2012consensus}. 
\printbibliography

@article{kawano2025contraction,
  title={Contraction Analysis of Differentially Positive Discrete-Time Stochastic Systems},
  author={Kawano, Yu and Hosoe, Yohei},
  journal={IEEE Transactions on Automatic Control},
  year={2025},
  publisher={IEEE}
}

@article{parasnis2025perron,
  title={A Perron-Frobenius Theorem for Strongly Aperiodic Stochastic Chains},
  author={Parasnis, Rohit and Franceschetti, Massimo and Touri, Behrouz},
  journal={IEEE Transactions on Automatic Control},
  year={2025},
  publisher={IEEE}
}

@ARTICLE{bastianello2024robust,
  author={Bastianello, Nicola and Deplano, Diego and Franceschelli, Mauro and Johansson, Karl H.},
  journal={IEEE Transactions on Automatic Control}, 
  title={Robust Online Learning Over Networks}, 
  year={2025},
  volume={70},
  number={2},
  pages={933-946},
  keywords={Measurement;Convergence;Computational modeling;Training;Distributed databases;Robustness;Numerical models;Asynchronous networks;distributed learning;online learning;unreliable communications},
  doi={10.1109/TAC.2024.3441723}}

@article{cai2010strong,
  title={Strong convergence theorems of a general iterative process for a finite family of {$\lambda$}-strict pseudo-contractions in q-uniformly smooth {B}anach spaces},
  author={Cai, Gang and song Hu, Chang},
  journal={Computers $\&$ Mathematics with Applications},
  volume={59},
  number={1},
  pages={149--160},
  year={2010},
  publisher={Elsevier}
}

@article{zhang2009convergence,
  title={Convergence theorems for strict pseudo-contractions in q-uniformly smooth {B}anach spaces},
  author={Zhang, Hong and Su, Yongfu},
  journal={Nonlinear Analysis: Theory, Methods \& Applications},
  volume={71},
  number={10},
  pages={4572--4580},
  year={2009},
  publisher={Elsevier}
}

@article{chidume2010weak,
  title={Weak convergence theorems for a finite family of strict pseudocontractions},
  author={Chidume, CE and Shahzad, Naseer},
  journal={Nonlinear Analysis: Theory, Methods \& Applications},
  volume={72},
  number={3-4},
  pages={1257--1265},
  year={2010},
  publisher={Elsevier}
}

@article{bastianello2020asynchronous,
  title={Asynchronous distributed optimization over lossy networks via relaxed {ADMM}: Stability and linear convergence},
  author={Bastianello, Nicola and Carli, Ruggero and Schenato, Luca and Todescato, Marco},
  journal={IEEE Transactions on Automatic Control},
  volume={66},
  number={6},
  pages={2620--2635},
  year={2020},
  publisher={IEEE}
}

@conference{deplano25withapp,
  title={On the convergence of the Krasnoselskij iteration
for strictly pseudocontractive operators},
  author={Deplano, Diego and Grammatico, Sergio and Franceschelli, Mauro},
  note={Available: https://diegodeplano.github.io/files/ECC25-kSpC-withapp.pdf }
}

@article{dall2019convergence,
  title={On the Convergence of the Inexact Running {K}rasnosel'skij--Mann Method},
  author={Dall’Anese, Emiliano and Simonetto, Andrea and Bernstein, Andrey},
  journal={IEEE Control Systems Letters},
  volume={3},
  number={3},
  pages={613--618},
  year={2019},
  publisher={IEEE}
}

@ARTICLE{deplano24neural,
  author={Deplano, Diego and Franceschelli, Mauro and Giua, Alessandro},
  journal={IEEE Control Systems Letters}, 
  title={Stability of Nonexpansive Monotone Systems and Application to Recurrent Neural Networks}, 
  year={2024},
  volume={8},
  number={},
  pages={1793-1798}
}

@article{manfredi2017necessary,
  title={Necessary and sufficient conditions for consensus in nonlinear monotone networks with unilateral interactions},
  author={Manfredi, Sabato and Angeli, David},
  journal={Automatica},
  volume={77},
  pages={51--60},
  year={2017},
  publisher={Elsevier}
}

@book{Berinde07,
  title={Iterative approximation of fixed points},
  author={Berinde, Vasile and Takens, F},
  volume={1912},
  year={2007},
  publisher={Springer}
}

@article{Ferreira96,
  title={The existence and uniqueness of the minimum norm solution to certain linear and nonlinear problems},
  author={Ferreira, Paulo Jorge SG},
  journal={Signal processing},
  volume={55},
  number={1},
  pages={137--139},
  year={1996},
  publisher={Elsevier}
}

@article{ishikawa76,
  title={Fixed points and iteration of a nonexpansive mapping in a {B}anach space},
  author={Ishikawa, Shiro},
  journal={Proceedings of the American Mathematical Society},
  volume={59},
  number={1},
  pages={65--71},
  year={1976}
}

@article{Edelstein66,
  title={A remark on a theorem of MA {K}rasnoselski},
  author={Edelstein, M},
  journal={Amer. Math. Monthly},
  volume={73},
  pages={509--510},
  year={1966}
}

@article{Krasnoselskii55,
  title={Two comments on the method of successive approximations},
  author={Krasnosel'skii, Mark Aleksandrovich},
  journal={Usp. Math. Nauk},
  volume={10},
  pages={123--127},
  year={1955}
}

@book{Bauschke2017,
author="Bauschke, Heinz H.
and Combettes, Patrick L.",
title="Convex Analysis and Monotone Operator Theory in Hilbert Spaces",
bookTitle="Convex Analysis and Monotone Operator Theory in Hilbert Spaces",
year="2017",
publisher="Springer",
address = "Cham, Switzerland",
edition = {2},
}

@book{Hunter2001,
  title={Applied analysis},
  author={Hunter, John K and Nachtergaele, Bruno},
  year={2001},
  publisher={World Scientific Publishing Company}
}

@article{Kamke32,
  title={Zur Theorie der Systeme gewöhnlicher Differentialgleichungen. II.},
  author={Kamke, Erich},
  journal={Acta Mathematica 58},
  year={1932},
  pages={57-85}
}

@book{Agarwal09,
  title={Fixed point theory for Lipschitzian-type mappings with applications},
  author={Agarwal, Ravi P and O'Regan, Donal and Sahu, DR},
  volume={6},
  year={2009},
  publisher={Springer}
}

@article{berinde2021fixed,
  title={Fixed points theorems for unsaturated and saturated classes of contractive mappings in {B}anach spaces},
  author={Berinde, Vasile and P{\u{a}}curar, M{\u{a}}d{\u{a}}lina},
  journal={Symmetry},
  volume={13},
  number={4},
  pages={713},
  year={2021},
  publisher={MDPI}
}

@article{xu1991inequalities,
  title={Inequalities in {B}anach spaces with applications},
  author={Xu, Hong-Kun},
  journal={Nonlinear Analysis: Theory, Methods \& Applications},
  volume={16},
  number={12},
  pages={1127--1138},
  year={1991},
  publisher={Elsevier}
}

@article{cholamjiak2011strong,
  title={Strong convergence for a countable family of strict pseudocontractions in q-uniformly smooth {B}anach spaces},
  author={Cholamjiak, Prasit and Suantai, Suthep},
  journal={Computers $\&$ Mathematics with Applications},
  volume={62},
  number={2},
  pages={787--796},
  year={2011},
  publisher={Elsevier}
}

@article{guo2023disa,
  title={DISA: A Dual inexact splitting algorithm for distributed convex composite optimization},
  author={Guo, Luyao and Shi, Xinli and Yang, Shaofu and Cao, Jinde},
  journal={IEEE Transactions on Automatic Control},
  year={2023},
  publisher={IEEE}
}

@article{giselsson2016linear,
  title={Linear convergence and metric selection for Douglas-Rachford splitting and {ADMM}},
  author={Giselsson, Pontus and Boyd, Stephen},
  journal={IEEE Transactions on Automatic Control},
  volume={62},
  number={2},
  pages={532--544},
  year={2016},
  publisher={IEEE}
}

@article{cholamjiak2010weak,
  title={Weak convergence theorems for a countable family of strict pseudocontractions in {B}anach spaces},
  author={Cholamjiak, Prasit and Suantai, Suthep},
  journal={Fixed Point Theory and Applications},
  pages={1--16},
  year={2010},
  publisher={Springer}
}

@article{zhou2010convergence,
  title={Convergence theorems for {$\lambda$}-strict pseudo-contractions in q-uniformly smooth {B}anach spaces},
  author={Zhou, Hai Yun},
  journal={Acta Mathematica Sinica, English Series},
  volume={26},
  number={4},
  pages={743--758},
  year={2010},
  publisher={Springer}
}

@article{zhang2009strong,
  title={Strong convergence theorems for strict pseudo-contractions in q-uniformly smooth {B}anach spaces},
  author={Zhang, Hong and Su, Yongfu},
  journal={Nonlinear Analysis: Theory, Methods $\&$ Applications},
  volume={70},
  number={9},
  pages={3236--3242},
  year={2009},
  publisher={Elsevier}
}

@article{Marino07,
  title={Weak and strong convergence theorems for strict pseudo-contractions in Hilbert spaces},
  author={Marino, Giuseppe and Xu, Hong-Kun},
  journal={Journal of Mathematical Analysis and Applications},
  volume={329},
  number={1},
  pages={336--346},
  year={2007},
  publisher={Elsevier}
}

@book{Bullo18,
	Author = {F. Bullo},
	Publisher = {Version 0.96},
	Title = {Lectures on Network Systems},
	Year = 2018}

@book{rockafellar2009variational,
  title={Variational analysis},
  author={Rockafellar, R Tyrrell and Wets, Roger J-B},
  volume={317},
  year={2009},
  publisher={Springer Science \& Business Media}
}

@article{muller1927fundamentaltheorem,
  title={{\"U}ber das Fundamentaltheorem in der Theorie der gew{\"o}hnlichen Differentialgleichungen},
  author={M{\"u}ller, Max},
  journal={Mathematische Zeitschrift},
  volume={26},
  number={1},
  pages={619--645},
  year={1927},
  publisher={Springer}
}

@article{Belgioioso18,
  title={On the convergence of discrete-time linear systems: A linear time-varying mann iteration converges IFF its operator is strictly pseudocontractive},
  author={Belgioioso, Giuseppe and Fabiani, Filippo and Blanchini, Franco and Grammatico, Sergio},
  journal={IEEE Control Systems Letters},
  volume={2},
  number={3},
  pages={453--458},
  year={2018},
  publisher={IEEE}
}

@inproceedings{he2015delving,
  title={Delving deep into rectifiers: Surpassing human-level performance on imagenet classification},
  author={He, Kaiming and Zhang, Xiangyu and Ren, Shaoqing and Sun, Jian},
  booktitle={Proceedings of the IEEE international conference on computer vision},
  pages={1026--1034},
  year={2015}
}

@article{Leenheer01,
  title={Stability properties of equilibria of classes of cooperative systems},
  author={De Leenheer, Patrick and Aeyels, Dirk},
  journal={IEEE Transactions on Automatic Control},
  volume={46},
  number={12},
  pages={1996--2001},
  year={2001},
  publisher={IEEE}
}

@article{winston2020monotone,
  title={Monotone operator equilibrium networks},
  author={Winston, Ezra and Kolter, J Zico},
  journal={Advances in neural information processing systems},
  volume={33},
  pages={10718--10728},
  year={2020}
}

@inproceedings{pabbaraju2021estimating,
  title={Estimating Lipschitz constants of monotone deep equilibrium models},
  author={Pabbaraju, Chirag and Winston, Ezra and Kolter, J Zico},
  booktitle={International conference on learning representations},
  year={2021}
}

@article{Angeli03,
  title={Monotone control systems},
  author={Angeli, David and Sontag, Eduardo D},
  journal={IEEE Transactions on automatic control},
  volume={48},
  number={10},
  pages={1684--1698},
  year={2003},
  publisher={IEEE}
}

@article{davydov2024NN,
  title={Non-Euclidean contraction analysis of continuous-time neural networks},
  author={Davydov, Alexander and Proskurnikov, Anton V and Bullo, Francesco},
  journal={IEEE Transactions on Automatic Control},
  year={2024},
  publisher={IEEE}
}

@article{jafarpour2022non,
  title={Non-Euclidean contraction theory for monotone and positive systems},
  author={Jafarpour, Saber and Davydov, Alexander and Bullo, Francesco},
  journal={IEEE Transactions on Automatic Control},
  year={2022},
  publisher={IEEE}
}

@article{bronski2014spectral,
  title={Spectral theory for dynamics on graphs containing attractive and repulsive interactions},
  author={Bronski, Jared C and DeVille, Lee},
  journal={SIAM Journal on Applied Mathematics},
  volume={74},
  number={1},
  pages={83--105},
  year={2014},
  publisher={SIAM}
}

@article{qu2025controllability,
  title={The controllability and structural controllability of Laplacian dynamics},
  author={Qu, Jijun and Ji, Zhijian and Liu, Yungang and Lin, Chong},
  journal={International Journal of Control},
  volume={98},
  number={5},
  pages={1011--1023},
  year={2025},
  publisher={Taylor \& Francis}
}

@article{altafini2012consensus,
  title={Consensus problems on networks with antagonistic interactions},
  author={Altafini, Claudio},
  journal={IEEE transactions on automatic control},
  volume={58},
  number={4},
  pages={935--946},
  year={2012},
  publisher={IEEE}
}

@inproceedings{bullo2021contraction,
  title={From contraction theory to fixed point algorithms on Riemannian and non-Euclidean spaces},
  author={Bullo, Francesco and Cisneros-Velarde, Pedro and Davydov, Alexander and Jafarpour, Saber},
  booktitle={2021 60th IEEE Conference on Decision and Control (CDC)},
  pages={2923--2928},
  year={2021},
  organization={IEEE}
}

@article{Bokharaie10,
  title={D-stability and delay-independent stability of homogeneous cooperative systems},
  author={Bokharaie, Vahid Samadi and Mason, Oliver and Verwoerd, Mark},
  journal={IEEE Transactions on automatic control},
  volume={55},
  number={12},
  pages={2882--2885},
  year={2010},
  publisher={IEEE}
}

@incollection{Hirsch03,
  title={Competitive and cooperative systems: a mini-review},
  author={Hirsch, Morris W and Smith, Hal L},
  booktitle={Positive systems},
  pages={183--190},
  year={2003},
  publisher={Springer}
}

@book{Smith08,
  title={Monotone Dynamical Systems: An Introduction to the Theory of Competitive and Cooperative Systems: An Introduction to the Theory of Competitive and Cooperative Systems},
  author={Smith, Hal L},
  number={41},
  year={2008},
  publisher={American Mathematical Soc.}
}

@article{oh2015survey,
  title={A survey of multi-agent formation control},
  author={Oh, Kwang-Kyo and Park, Myoung-Chul and Ahn, Hyo-Sung},
  journal={Automatica},
  volume={53},
  pages={424--440},
  year={2015},
  publisher={Elsevier}
}

@article{yang2019survey,
  title={A survey of distributed optimization},
  author={Yang, Tao and Yi, Xinlei and Wu, Junfeng and Yuan, Ye and Wu, Di and Meng, Ziyang and Hong, Yiguang and Wang, Hong and Lin, Zongli and Johansson, Karl H},
  journal={Annual Reviews in Control},
  volume={47},
  pages={278--305},
  year={2019},
  publisher={Elsevier}
}

@article{davydov2024non,
  title={Non-Euclidean monotone operator theory and applications},
  author={Davydov, Alexander and Jafarpour, Saber and Proskurnikov, Anton V and Bullo, Francesco},
  journal={Journal of Machine Learning Research},
  volume={25},
  number={307},
  pages={1--33},
  year={2024}
}

@article{li2023dot,
  title={Dot and dop: Linearly convergent algorithms for finding fixed points of multi-agent operators},
  author={Li, Xiuxian and Meng, Min and Xie, Lihua},
  journal={IEEE Transactions on Automatic Control},
  year={2023},
  publisher={IEEE}
}

@article{nian2023continuous,
  title={Continuous-Time Distributed Algorithm for Seeking Fixed Points of Multi-Agent Quasi-Nonexpansive Operators},
  author={Nian, Xiaohong and Liu, Dongxin and Li, Fan},
  journal={IEEE Transactions on Control of Network Systems},
  year={2023},
  publisher={IEEE}
}

@article{andrade2022distributed,
  title={Distributed {B}anach--{P}icard Iteration for Locally Contractive Maps},
  author={Andrade, Francisco and Figueiredo, Mario A.T. and Xavier, Jo{\~a}o},
  journal={IEEE Transactions on Automatic Control},
  volume={68},
  number={2},
  pages={1275--1280},
  year={2022},
  publisher={IEEE}
}

@article{li2020distributed,
  title={Distributed algorithms for computing a fixed point of multi-agent nonexpansive operators},
  author={Li, Xiuxian and Xie, Lihua},
  journal={Automatica},
  volume={122},
  pages={109286},
  year={2020},
  publisher={Elsevier}
}

@article{pavel2019distributed,
  title={Distributed GNE seeking under partial-decision information over networks via a doubly-augmented operator splitting approach},
  author={Pavel, Lacra},
  journal={IEEE Transactions on Automatic Control},
  volume={65},
  number={4},
  pages={1584--1597},
  year={2019},
  publisher={IEEE}
}

@article{grammatico2017dynamic,
  title={Dynamic control of agents playing aggregative games with coupling constraints},
  author={Grammatico, Sergio},
  journal={IEEE Transactions on Automatic Control},
  volume={62},
  number={9},
  pages={4537--4548},
  year={2017},
  publisher={IEEE}
}

@article{grammatico2015decentralized,
  title={Decentralized convergence to Nash equilibria in constrained deterministic mean field control},
  author={Grammatico, Sergio and Parise, Francesca and Colombino, Marcello and Lygeros, John},
  journal={IEEE Transactions on Automatic Control},
  volume={61},
  number={11},
  pages={3315--3329},
  year={2015},
  publisher={IEEE}
}

@INPROCEEDINGS{Deplano18,
author={D. {Deplano} and M. {Franceschelli} and A. {Giua}},
booktitle={IEEE Conference on Decision and Control (CDC)},
title={{L}yapunov-Free Analysis for Consensus of Nonlinear Discrete- Time Multi-Agent Systems},
year={2018},
volume={},
number={},
pages={2525-2530},}

@article{Deplano20,
  title={A nonlinear Perron--Frobenius approach for stability and consensus of discrete-time multi-agent systems},
  author={Deplano, Diego and Franceschelli, Mauro and Giua, Alessandro},
  journal={Automatica},
  volume={118},
  pages={109025},
  year={2020},
  publisher={Elsevier}
}

@ARTICLE{Deplano23,
  author={Deplano, D. and Franceschelli, M. and Giua, A.},
 journal={IEEE Transactions on Automatic Control}, 
  title={Novel Stability Conditions for Nonlinear Monotone Systems and Consensus in Multiagent Networks}, 
  year={2023},
  volume={68},
  number={12},
  pages={7028-7040},
  doi={10.1109/TAC.2023.3246419}}

@article{Jafarpour21,
  title={Robust implicit networks via non-Euclidean contractions},
  author={Jafarpour, Saber and Davydov, Alexander and Proskurnikov, Anton and Bullo, Francesco},
  journal={Advances in Neural Information Processing Systems},
  volume={34},
  pages={9857--9868},
  year={2021}
}

@article{devries2018kernel,
  title={Kernel design and distributed, self-triggered control for coordination of autonomous multi-agent configurations},
  author={DeVries, Levi and Sims, Aaron and Kutzer, Michael DM},
  journal={Robotica},
  volume={36},
  number={7},
  pages={1077--1097},
  year={2018},
  publisher={Cambridge University Press}
}

@inproceedings{li2021training,
  title={Training graph neural networks with 1000 layers},
  author={Li, Guohao and M{\"u}ller, Matthias and Ghanem, Bernard and Koltun, Vladlen},
  booktitle={International conference on machine learning},
  pages={6437--6449},
  year={2021},
  organization={PMLR}
}

@book{bondy2008graph,
  title={Graph theory},
  author={Bondy, John Adrian and Murty, Uppaluri Siva Ramachandra},
  year={2008},
  publisher={Springer Publishing Company, Incorporated}
}

@article{bonetto2022nonlinear,
  title={Nonlinear Laplacian Dynamics: Symmetries, Perturbations, and Consensus},
  author={Bonetto, Riccardo and Kojakhmetov, Hildeberto Jard{\'o}n},
  journal={arXiv preprint arXiv:2206.04442},
  year={2022}
}

\end{document}